\NewDocumentCommand{\binomial}{omm}
 {%
  \genfrac(){0pt}{}{#2}{#3}%
  \IfValueT{#1}{_{\!#1}}%
 }
\NewDocumentCommand{\eulerian}{omm}
 {%
  \genfrac<>{0pt}{}{#2}{#3}%
  \IfValueT{#1}{_{\!#1}}%
 }
\def \s {\sigma}
\title{$\Delta$-Algebra and Scattering Amplitudes}
\author[a]{Freddy Cachazo,}\emailAdd{fcachazo@pitp.ca}
\author[b]{Nick Early,}\emailAdd{earlnick@gmail.com}
\author[a,c,d]{Alfredo Guevara,}\emailAdd{aguevara@pitp.ca}
\author[a,c]{and Sebastian Mizera}\emailAdd{smizera@pitp.ca}
\affiliation[a]{Perimeter Institute for Theoretical Physics, Waterloo, ON N2L 2Y5, Canada}
\affiliation[b]{Massachusetts Institute of Technology, Cambridge, MA, United States}
\affiliation[c]{Department of Physics \& Astronomy, University of Waterloo, Waterloo, ON N2L 3G1, Canada}
\affiliation[d]{CECs Valdivia \& Departamento de F\'isica, Universidad de Concepci\'on, Casilla 160-C,\\ Concepci\'on, Chile}
\abstract{In this paper we study an algebra that naturally combines two familiar operations in scattering amplitudes: computations of volumes of polytopes using triangulations and constructions of canonical forms from products of smaller ones. We mainly concentrate on the case of $G(2,n)$ as it controls both general MHV leading singularities and CHY integrands for a variety of theories. This commutative algebra has also appeared in the study of configuration spaces and we called it the $\Delta$-algebra. As a natural application, we generalize the well-known square move. This allows us to generate infinite families of new moves between non-planar on-shell diagrams. We call them {\it sphere moves}. Using the $\Delta$-algebra we derive familiar results, such as the KK and BCJ relations, and prove novel formulas for higher-order relations. Finally, we comment on generalizations to $G(k,n)$.}
\begin{document}

\maketitle
\addtocontents{toc}{\protect\setcounter{tocdepth}{1}}
\def \tr {\nonumber\\}
\def \la  {\langle}
\def \ra {\rangle}
\def\hset{\texttt{h}}
\def\gset{\texttt{g}}
\def\sset{\texttt{s}}
\def \be {\begin{equation}}
\def \ee {\end{equation}}
\def \ba {\begin{eqnarray}}
\def \ea {\end{eqnarray}}
\def \k {\kappa}
\def \h {\hbar}
\def \r {\rho}
\def \l {\lambda}
\def \be {\begin{equation}}
\def \en {\end{equation}}
\def \bes {\begin{eqnarray}}
\def \ens {\end{eqnarray}}
\def \red {\color{Maroon}}
\def \pt {{\rm PT}}
\def \s {\sigma}
\def \ls {{\rm LS}}
\def \ma {\Upsilon}

\numberwithin{equation}{section}
\setcounter{page}{2}

\theoremstyle{plain}
\newtheorem{lemma}{Lemma}[section]
\newtheorem{theorem}{Theorem}[section]

\theoremstyle{definition}
\newtheorem{definition}{Definition}[section]
\newtheorem{remark}{Remark}[section]
\newtheorem{corollary}{Corollary}[section]
\newtheorem{example}{Example}[section]
\newtheorem{proposition}{Proposition}[section]
\newtheorem{conjecture}{Conjecture}[section]

\section{Introduction: Motivating the $\Delta$-Algebra}

Scattering amplitudes in ${\cal N}{=}4$ super Yang--Mills in four dimensions are known to be constructible in terms of two classes of three-particle amplitudes, usually represented as black and white trivalent vertices \cite{ArkaniHamed:2012nw}. Gluing such vertices so that each edge is on-shell gives rise to on-shell diagrams. Choosing an ordering for $n$ external particles, planar on-shell diagrams become plabic graphs \cite{postnikov2006total} and are deeply connected to the positive Grassmannians $G_{\geq 0}(k,n)$ \cite{ArkaniHamed:2012nw,postnikov2006total}. In physics, $k$ represents the N$^{k-2}$MHV sector the diagram belongs to.
It is known that the simplest sector, i.e., $k=2$ or MHV, is special in many ways. In particular, any on-shell diagram, planar or not, associated with a top dimensional region of $G(2,n)$ can be characterized by a set of $n{-}2$ triples of labels \cite{Arkani-Hamed:2014bca}. This is because each such on-shell diagram contains exactly $n{-}2$ black trivalent vertices even after contracting like-colored vertices to make the diagram  bipartite. Moreover, in the bipartite diagram each white vertex has exactly one external leg attached to it and each black vertex is connected to three distinct white vertices. As explained in \cite{Arkani-Hamed:2014bca}, and recently explored in \cite{He:2018okq} from a novel view point, one can associate a 2-form to each black vertex
\be\label{wis}
\Omega_{abc} = d\log \frac{\langle a,b\rangle}{\langle a,c\rangle}\wedge d\log \frac{\langle b,c\rangle}{\langle a,c\rangle}
\ee
so that the top form, i.e., a $2(n{-}2)$-form associated with a given on-shell graph is simply given by
\be\label{wedges}
\Omega_{\cal T} = \bigwedge_{\tau\in {\cal T}}\Omega_{\tau_1,\tau_2,\tau_3},
\ee
where ${\cal T}$ is a list of $n-2$ triples $\tau$ defining the diagram.

Also in \cite{Arkani-Hamed:2014bca}, an alternative formula for a rational function associated with the on-shell diagram ${\cal T}$ was given in term of the reduced determinant of a matrix $M$ constructed from ${\cal T}$
\be\label{rtnl}
F_{\cal T} = 
\frac{\left({\rm det}'M\right)^2}{\prod_{\tau\in {\cal T}}\langle\tau_1,\tau_2\rangle\langle\tau_2,\tau_3\rangle\langle\tau_3,\tau_1\rangle}.
\ee
Here we rewrite the determinants in \eqref{rtnl} in terms of integrals over Grassmann variables $\theta_a, \chi_a$ motivating the definition of the following object:
\be
\Delta_{abc}:=\frac{\left(\theta_{a}\langle b,c\rangle+\theta_{b}\langle c,a\rangle+ \theta_{c}\langle a,b\rangle \right)\left(\chi_{a}\langle b,c\rangle+\chi_{b}\langle c,a\rangle+\chi_{c}\langle a,b\rangle \right)}{\langle a,b\rangle\langle b,c\rangle\langle c,a\rangle}.
\ee
This object naturally maps to $\Omega_{abc}$ under the map introduced by He and Zhang \cite{He:2018okq} that takes Grassmann variables to differential forms.  

The formula for $F_{\cal T}$ can then be written as Grassmann integrations over the product of $n{-}2$ $\Delta$'s; one for each triple of labels in ${\cal T}$. Moreover, we identify the {\it integrand} as the physically relevant object by making the following definition
\be\label{DLS}
{\rm LS}_{\cal T}:= \prod_{\tau\in {\cal T}}\Delta_{\tau_1,\tau_2,\tau_3}.
\ee
${\rm LS}$ stands for ``leading singularity" which is the terminology for the physical meaning of the quantity \cite{Britto:2004nc,Buchbinder:2005wp,Cachazo:2008vp}. Note that ${\rm LS}_{\cal T}$ has the same structural form as \eqref{wedges}. 

Two crucial properties of $\Delta$'s are that they are nilpotent, $\Delta^2=0$, and they commute, $\Delta\Delta'=\Delta'\Delta$. Using these two properties, $\text{LS}_{\cal T}$ can be rewritten as
\be
{\rm LS}_{\cal T} = \frac{1}{(n-2)!}\left(\sum_{\tau\in {\cal T}}\Delta_{\tau_1,\tau_2,\tau_3}\right)^{n-2}.
\ee
The simplest leading singularity is known as the Parke--Taylor function and is obtained by choosing, e.g., $\mathcal{T}=\{(123),(134),\ldots (1,n{-}1,n)\}$ and therefore
\be
{\rm LS}_{\text{Parke--Taylor}} = \frac{1}{(n-2)!}\left(\sum_{i=2}^{n-1}\Delta_{1,i,i+1}\right)^{n-2}.
\ee

In a different line of developments, the study of positive geometries \cite{Arkani-Hamed:2017tmz}, and the amplituhedron \cite{Arkani-Hamed:2013jha} has led to the development of volume functions usually denoted by $[i_1\ldots i_{m+1}]$, see, e.g., \cite{Arkani-Hamed:2017vfh,Ferro:2018vpf}. Here $m$ denotes the $\mathbb{CP}^{m}$ where the object lives. These functions are associated with simplices that can be put together to form more complicated geometries. The corresponding volume is the sum of the volume functions. Letting $m=2$ gives rise to volume (or area) functions with three labels. The simplest example is an $n$-sided polygon whose area is computed, e.g., as
\be
\sum_{i=2}^{n-1}[1,i,i+1].
\ee

These two lines of development naturally motivate the study of the algebra generated by $\Delta$'s which we call the $\Delta$-algebra.

\begin{figure}
\centering\includegraphics[scale=.9]{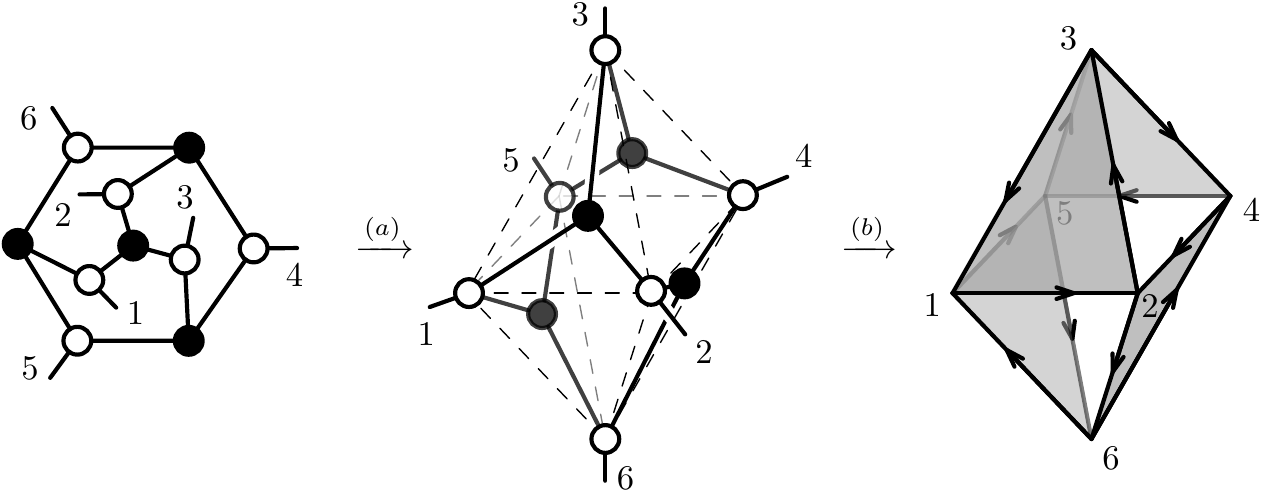}
\caption{\label{fig:intro1}Translation from an on-shell diagram to a set of triangles living on a sphere, for the example of the octahedral leading singularity associated to the set of triples $\mathcal{T}=\{(1,2,3),(3,4,5),(5,6,1),(6,4,2)\}$. (a) Place black and white vertices at the faces and vertices of the octahedron respectively. (b) Associate a shaded face to each black vertex. Orientation of each face is that of the corresponding triple.}
\end{figure}

Interestingly, the $\Delta$-algebra has also appeared in the study of configuration spaces, in particular the configuration space of $n$ distinct points in $SU(2)$. More precisely, in \cite{early2018configuration}, the cohomology ring of the configuration space, $\text{H}^\ast(\text{Conf}_n(SU(2))\slash SU(2), \mathbb{C})$, of $n$ distinct points on $SU(2)$ modulo the diagonal action, was constructed as a subalgebra, denoted $\mathcal{V}^n$, of the cohomology ring $\text{H}^\ast(\text{Conf}_n(\mathbb{R}^3), \mathbb{C})$.\footnote{See also \cite{moseley2017orlik} for a related conjectural description of the cohomology ring  $\text{H}^\ast(\text{Conf}_n(SU(2))\slash SU(2), \mathbb{C})$.  
}
By comparing generators and relations, one finds that in fact the generators $\Delta_{abc}$ satisfy the same relations as the generators $v_{abc}$ for $\mathcal{V}^n$.  Therefore the $\Delta$-algebra is a representation of $\mathcal{V}^n$, with $v_{abc}\mapsto \Delta_{abc}$. For background on configuration spaces, see, e.g., the review \cite{knudsen2018configuration} and the classic papers \cite{Arnol'd1969,10.2307/2946631,TOTARO19961057,kriz1994rational}. For related motivating work on permutohedral tessellations and blades, see \cite{OcneanuLectures,Early:2018mac}.

In this paper we initiate the study of the $\Delta$-algebra applications in scattering amplitudes. In particular, MHV on-shell diagrams that produce leading singularities. A crucial property of on-shell diagrams is that any two diagrams related by an operation known as the ``square move" give rise to the same rational function \eqref{rtnl}, i.e., the same physical object. Moreover, if the diagrams are planar (and reduced) it was proven by Postnikov \cite{postnikov2006total} that the square move, combined with expansion and contraction moves of like-colored vertices, is enough to define equivalence classes of diagrams, known as plabic graphs, each encoding a different cell in $G_{\geq 0}(k,n)$.

In the non-planar case, already at six points this is not the case as found by the second author in the study of permutohedral tessellations and blades \cite{Early:2018mac}. There are two MHV on-shell diagrams not connected via square moves that give rise to the same rational function via something that can be called the octahedral move. As anticipated in \cite{Early:2017lku}, finding a structure such as the $\Delta$-algebra associated with on-shell diagrams makes these new identifications natural and easy to prove. In fact, we show that the square and octahedral moves are the simplest cases of an infinite family of moves we call \emph{sphere moves}. Moreover, the very fact that one needs $n{-}2$ triples (to which we associate oriented triangles) to define a leading singularity and therefore the equivalence of two such objects requires $2(n{-}2)$ triangles gives the triangulation of the surface of a sphere with $n$ vertices and hence the name. We illustrate how a given on-shell diagram, that admits a sphere move, induces a triangulation in Figure~\ref{fig:intro1} that shows the emergence of a sphere.

\begin{figure}
\centering\includegraphics[scale=.88]{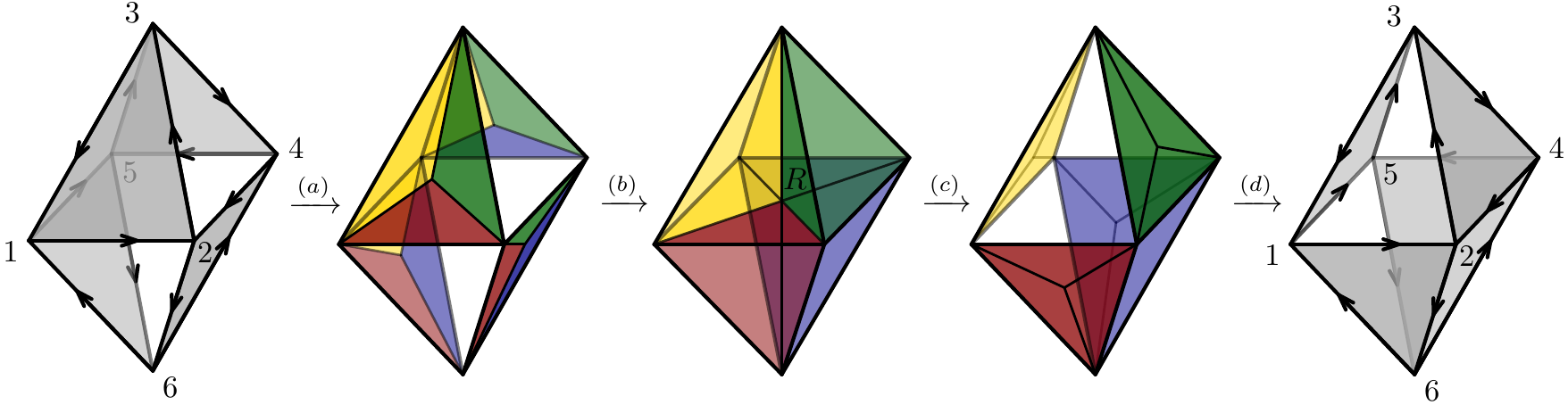}
\caption{\label{fig:intro2}Sphere move for the octahedron. (a) Start with $\mathcal{T}=\{(1,2,3),(3,4,5),(5,6,1),(6,4,2)\}$ and subdivide each triangle $\Delta_{abc}$ with a reference point (for clarity we remove orientations from the edges). (b) Move all four reference points into the common point $R$. (c) Open up the triangles into a different set of (shaded) triangles. (d) Remove the reference points to obtain the leading singularity associated to $\mathcal{T}=\{(2,3,4),(4,5,6),(6,1,2),(1,5,3)\}$.}
\end{figure}

The sphere move is performed by subdividing the triangle corresponding to $\Delta_{abc}$ into three new ones with a reference point $r$, such that:
\be
\Delta_{abc} = \Delta_{abr}+\Delta_{bcr}+\Delta_{car}.
\ee
Moving all reference points $r$ into a common one, say $R$, it is possible to open them up in a different order, such that the resulting triangles are the complement of the ones we started with. See Figure~\ref{fig:intro2} for an example.

In fact, it is most convenient to send the reference point $R$ to infinity, which gives variables associated to (oriented) edges:
\be
u_{ab} := \lim_{R \to \infty} \Delta_{abR},
\ee
with $u_{ab} = - u_{ba}$, so that $\Delta_{abc}=u_{ab}+u_{bc}+u_{ca}$. In this language, two sets of triangles are connected by a sphere move if their boundaries (i.e., $\Delta_{abc}$'s expanded as sums of $u_{ab}$'s) are the same. See Figure~\ref{fig:intro3} for an example of a square move as the simplest case of a sphere move.

\begin{figure}
\centering\includegraphics[scale=0.9]{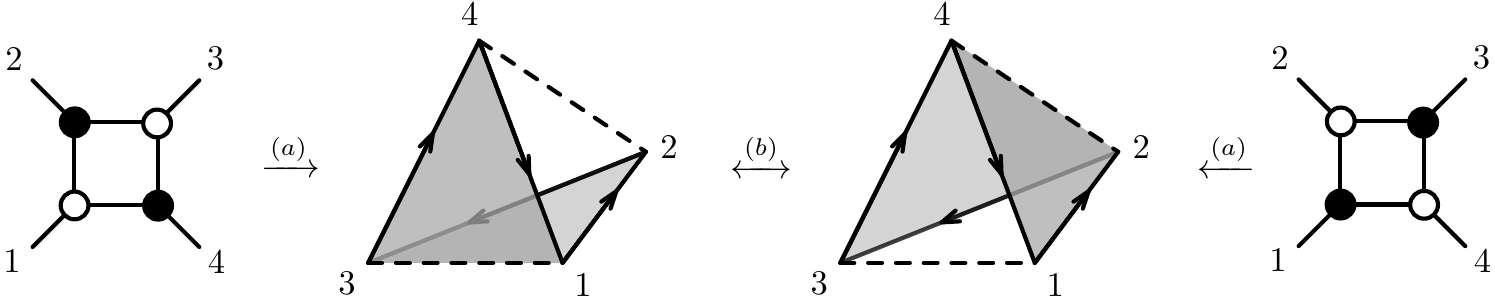}
\caption{\label{fig:intro3}Square move as the simplest example of a sphere move. (a) Translation between on-shell diagrams and triangulations of spheres. (b) On the LHS we have $\mathcal{T}_L=\{(1,2,3),(1,3,4)\}$, while on the RHS we have $\mathcal{T}_R=\{(1,2,4),(2,3,4)\}$. Both of their boundaries give the same 1-complex associated to the oriented edge set $\{(12),(23),(34),(41)\}$. Note that edges $(13)$ and $(24)$ cancel on both sides. Algebraically, we have $\Delta_{123}+\Delta_{134} = \Delta_{124}+\Delta_{234} = u_{12} + u_{23} + u_{34} + u_{41}$.}
\end{figure}

Many of the applications we find of the $\Delta$-algebra follow from the fact that one can use a triangulation independent formulation as the one introduced by Enciso \cite{Enciso:2014cta,Enciso:2016cif} and used in the mathematical construction of configuration spaces \cite{early2018configuration}. 

On-shell diagrams have been the subject of many studies and there are many properties which are understood with very elegant proofs \cite{Hodges:2009hk}, for reviews see \cite{Feng:2011np,Elvang:2013cua}. Here we present the $\Delta$-algebra formulation some of these properties, e.g., the $U(1)$ decoupling identity and the fundamental Bern--Carrasco--Johansson (BCJ) relation \cite{Bern:2008qj}. Moreover, the construction naturally leads to higher order identities such as those expected from the string theory formulation in terms of amplitudes involving $\text{Tr}F^4$ amplitudes \cite{BjerrumBohr:2009rd,BjerrumBohr:2010zs}. In our derivation, the objects that appear are double-trace-like instead. 

The fact that the determinant formula \eqref{rtnl}, as shown by Franco, Galloni, Penante, and Wen \cite{Franco:2015rma}, and Enciso's construction can both be generalized to higher $k$, motivated us to also generalize the $\Delta$-algebra to higher $k$. Here we only start the study of the object and explain some of the most basic properties.

This paper is organized as follows. In Section~\ref{sec:leading-singularities} we introduce the $\Delta$-algebra and explain its connection to leading singularities. In Section~\ref{sec:sphere-move} we use it to define sphere moves between classes of on-shell diagrams. In Section~\ref{sec:identities} we show identities satisfied by the $\Delta$-algebra, which allow us to prove known and new relations among leading singularities. Longer proofs needed in this section are relegated to the appendices. In Section~\ref{sec:higher-k} we discuss higher-$k$ generalizations of the $\Delta$-algebra. We end with a discussion of future directions in Section~\ref{sec:discussion}.

\section{\label{sec:leading-singularities}Leading Singularities: Introducing $\Delta_{abc}$}

Leading singularities in ${\cal N}{=}4$ super Yang--Mills are the most basic IR finite quantities in the theory \cite{ArkaniHamed:2009dn}. They are computed using on-shell diagrams and are classified by their R-charge in sectors \cite{ArkaniHamed:2012nw}. The simplest and most well-understood sector is the maximal helicity violating (MHV) one. All $n$-particle MHV leading singularities can be constructed from a set of $n{-}2$ cyclically-ordered triples of distinct labels ${\cal T} = \{ (a_1,b_1,c_1),\ldots ,(a_{n-2},b_{n-2},c_{n-2})\}$. The explicit form of a given leading singularity can be compactly encoded in the reduced determinant of a $(n{-}2)\times n$ matrix $M$ \cite{Arkani-Hamed:2014bca}. Rows are labeled by triples while columns by particles. A row corresponding to triple $(a_i,b_i,c_i)$ has only three non-zero components at columns $a_i$, $b_i$ and $c_i$ with values $\langle b_i,c_i\rangle, \langle c_i,a_i\rangle$ and $\langle a_i,b_i\rangle$ respectively. MHV leading singularities are known to depend only on holomorphic spinors $\lambda_a := \{\lambda_{a,1},\lambda_{a,2}\}$ through $SL(2,{\mathbb{C}})$ invariant combinations $\langle a,b\rangle$, which can be also thought of as Pl\"ucker coordinates of $G(2,n)$.

The matrix $M$ has two column null vectors with components $v^{(1)}_a=\lambda_{a,1}$ and $v^{(2)}_a=\lambda^{(1)}_{a,2}$. Using this it is possible to remove two columns, say the $a^{\text{th}}$ and $b^{\text{th}}$ columns, of $M$ to obtain a $(n{-}2)\times (n{-}2)$ submatrix $M^{(ab)}$ and show that
\begin{equation}
{\rm det}' M := \frac{{\rm det}\, M^{(ab)}}{\langle a,b\rangle}
\end{equation}
is independent of the choice of deleted columns.

The rational function associated to a list of triples $\cal T$ is then obtained as \eqref{rtnl}
\begin{equation}\label{ls}
F_{\cal T} = \frac{\left({\rm det}' M\right)^2}{\prod_{i=1}^{n-2}\langle a_i,b_i\rangle\langle b_i,c_i\rangle\langle c_i,a_i\rangle}.
\end{equation}

\subsection{Product of Triples}

In order to obtain a reformulation of leading singularities that leads to the $\Delta$-algebra, the first step is to use the well-known  formulation of determinants in terms of Grassmann variables.\footnote{Recall that any pair of Grassmann variables anticommutes, $\{\theta_a,\theta_b\} = 0$, $\{\bar\theta_a,\bar\theta_b\} = 0$, $\{\bar\theta_a,\theta_b\} = 0$, and the integration rule is $\int d\theta_a (\alpha + \beta \theta_b) = \beta \delta_{ab}$ for constants $\alpha,\beta \in \mathbb{C}$, see, e.g., \cite{deligne1999quantum} for a review.} Given any $m{\times}m$ matrix $R$, the determinant of $R$ can be expresses as an integral over two sets of Grassmann variables $\theta_a$ and $\bar\theta_a$ as follows
\be
{\rm det}\, R = \int \prod_{a=1}^m d\theta_a d\bar\theta_a {\rm exp}\left( \sum_{a,b=1}^m \theta_a R_{ab}\bar\theta_b \right).
\ee

It will be useful to carry out the integration over either set of Grassmann variables, say $\bar\theta$ to obtain
\be\label{pro}
{\rm det}\, R = \int \prod_{a=1}^m d\theta_a \prod_{c=1}^m \left(\sum_{b=1}^m \theta_b R_{bc}\right).
\ee

Let us apply \eqref{pro} to the determinant entering in the formula for rational function associated to a leading singularity \eqref{ls}. One can write
\be
{\rm det}' M =\frac{{\rm det}\, M^{(de)}}{\langle d,e\rangle} = \int \prod_{a=1}^n d\theta_a\, \prod_{i=1}^{n-2}\left(\theta_{a_i}\langle b_i,c_i\rangle+\theta_{b_i}\langle c_i,a_i\rangle+ \theta_{c_i}\langle a_i,b_i\rangle \right) \times \frac{\theta_d\theta_e}{\langle d,e\rangle}.
\ee

Note that the choice of columns that are removed is done by the choice of Grassmann variables in the last factor. Indeed, since $\int d\theta_a \, \theta_a = 1$, it is useful to think about $\theta = \delta(\theta)$ and therefore the last factor can be written as $\delta(\theta_d)\delta(\theta_e)$ which then imply that $\theta_e$ and $\theta_d$ must be set to zero in the rest of the integrand thus removing the corresponding columns.

Finally, since the leading singularity formula \eqref{ls} has two powers of the determinant, one can introduce a different set of Grassmann variables, say $\chi_a$, to write it. This shows that a leading singularity associated to a set of triples ${\cal T}$ is given by
\be
F_{\cal T} = \int \prod_{a=1}^n d\theta_a d\chi_a \, \prod_{i=1}^{n-2}\Delta_{a_ib_ic_i} \times \frac{\theta_d\theta_e}{\langle d,e\rangle}\frac{\chi_f\chi_g}{\langle f,g\rangle},
\label{LS-map}
\ee
where
\be
\Delta_{abc}:=\frac{\left(\theta_{a}\langle b,c\rangle+\theta_{b}\langle c,a\rangle+ \theta_{c}\langle a,b\rangle \right)\left(\chi_{a}\langle b,c\rangle+\chi_{b}\langle c,a\rangle+\chi_{c}\langle a,b\rangle \right)}{\langle a,b\rangle\langle b,c\rangle\langle c,a\rangle}.
\ee

The rest of the paper is devoted to studying properties of $\Delta_{abc}$'s as generators of a commutative algebra and its applications to physical quantities.

\subsection{From Homogeneous to Inhomogeneous Variables}

The variables $\lambda_a$ can be thought of as homogeneous variables of points on $\mathbb{CP}^1$. In physics applications, such as Cachazo--He--Yuan (CHY) \cite{Cachazo:2013hca} and Witten--Roiban--Spradlin--Volovich formulas \cite{Witten:2003nn,Roiban:2004yf} , it is more convenient to work with inhomogeneous variables. In order to go from one to the other it is enough to write
\be
\lambda_a = \left(
              \begin{array}{c}
                \lambda_{a,1} \\
                \lambda_{a,2} \\
              \end{array}
            \right) = t_a \left(
              \begin{array}{c}
                1 \\
                x_a \\
              \end{array}
            \right)
\ee
and $(\theta_a,\chi_a)\to t_a(\theta_a,\chi_a)$.

Under these operations one finds that $\Delta_{abc}$ has a much more compact form
\be\label{inho}
\Delta_{abc} = -\frac{\left(\theta_{a}x_{bc}+\theta_{b}x_{ca}+ \theta_{c}x_{ab}\right)\left(\chi_{a}x_{bc}+\chi_{b}x_{ca}+ \chi_{c}x_{ab}\right)}{x_{ab}x_{bc}x_{ca}},
\ee
where we have introduced the shorthand notation $x_{ab}:=x_a-x_b$.

Given that the expression for a given leading singularity is independent of the choice of columns deleted, i.e., the choice of $d,e,f$ and $g$ in the factor
\be
\frac{\theta_d\theta_e}{x_{de}}\frac{\chi_f\chi_g}{x_{fg}}
\ee
we choose to refer to a leading singularity as simply the product of $n{-}2$ $\Delta_{abc}$
\be
{\rm LS}_{\cal T}:= \prod_{\tau\in {\cal T}}\Delta_{\tau_1,\tau_2,\tau_3}.
\ee
It is straightforward to see that the map \eqref{LS-map} between $\text {LS}_{\cal T}$ and $F_{\cal T}$ is an isomorphism.
This is the formula presented in the introduction in \eqref{DLS}. Of course, the physically relevant object is the rational function $F_{\cal T}$ associated with it which is obtained after integrating out the Grassmann variables.

\subsection{Properties of $\Delta_{abc}$}

The building block of leading singularities has several crucial properties. First, $\Delta_{abc}$ is completely antisymmetric in its indices. Second, once again one can use that for any Grassmann variable $\theta = \delta(\theta)$ to conclude that
\be
\Delta_{abc}  = -\frac{\delta\left(\theta_{a}x_{bc}+\theta_{b}x_{ca}+ \theta_{c}x_{ab}\right)\delta\left(\chi_{a}x_{bc}+\chi_{b}x_{ca}+ \chi_{c}x_{ab}\right)}{x_{ab}x_{bc}x_{ca}}
\ee
and therefore $\Delta_{abc}^2 = 0$. Moreover, since the Grassmann degree is two, they commute $\Delta_{abc}\Delta_{efg}=\Delta_{efg}\Delta_{abc}$.

Finally, a small amount of algebra reveals that
\be\label{exu}
\Delta_{abc} = u_{ab} + u_{bc} + u_{ca}
\ee
with
\be\label{defU}
u_{ab}:= \frac{\theta_{ab}\chi_{ab}}{x_{ab}}.
\ee
Here $\theta_{ab}:=\theta_a-\theta_b$ and $\chi_{ab}:=\chi_a-\chi_b$. Note that $u_{ab}=-u_{ba}$ while, once again, the nature of Grassmann variables implies that $u_{ab}^2=0$ while $\Delta_{abc}^2 = 0$ shows that
\be
u_{ab}u_{bc} + u_{bc}u_{ca}+u_{ca}u_{ab} = 0.
\ee
These relations for $u_{ab}$ are exactly those defining the commutative algebra governing the cohomology ring of the configuration space, $\text{H}^\ast(\text{Conf}_n(SU(2))\slash SU(2),\mathbb{C})$, of $n$ distinct points in $SU(2)$ modulo the diagonal action was constructed as a subalgebra of the cohomology ring $\text{H}^\ast(\text{Conf}_n(\mathbb{R}^3),\mathbb{C})$ as found in \cite{early2018configuration}. In fact, our main motivation was to find a physical representation of this algebra.

It is also possible to derive the expression \eqref{exu} using the intuition developed by Enciso \cite{Enciso:2014cta,Enciso:2016cif} in his construction of a triangulation independent version of amplituhedron formulas. In order to see this, one starts by showing that $\Delta_{abc}$ satisfies properties of the boundary of an oriented triangle. Without assuming \eqref{exu} one can easily show from \eqref{inho} that
\be
\Delta_{abc} = \Delta_{abr}+\Delta_{bcr}+\Delta_{car}.
\label{eq:triangulationmain}
\ee
Note that since the LHS does not depend on $x_r$, it must be that neither does the RHS and therefore one can set $x_r$ to any value. Taking the limit $x_r\to \infty$ gives \eqref{exu} and we discover that 
\be
u_{ab} = \lim_{x_r\to \infty} \Delta_{abr}.
\ee

This point of view will be very useful in generalizations to higher values of $k$, as discussed in Section~\ref{sec:higher-k}. This concludes our presentation of the $\Delta$-algebra and we turn to physical applications. 

\section{\label{sec:sphere-move}Sphere Moves for Non-Planar On-Shell Diagrams}

The first application of the $\Delta$-algebra formulation of leading singularities has to do with their graphical representation as on-shell diagrams. Associated with a given leading singularity function there can be many different lists of triples, each representing an on-shell diagram. It is well-known that any two on-shell diagrams related by a square move give rise to the same physical object \cite{postnikov2006total,ArkaniHamed:2012nw}. In the planar case, this move, combined with expansion and contraction moves of like-colored vertices, is enough to define equivalence classes of physical objects. Here we find that the square move is the simplest example of an infinite class of moves that become available once non-planarity is allowed. 

We start by recalling that a leading singularity is represented by a set of triples
\be
{\cal T} = \{ (a_1,b_1,c_1), (a_2,b_2,c_2), \ldots ,(a_{n-2},b_{n-2},c_{n-2})\}
\ee
is computed as
\be
{\rm LS}_{\cal T} = \prod_{i=1}^{n-2}\Delta_{a_ib_ic_i}.
\ee
This object can be rewritten in a variety of ways. One possibility is by choosing complex numbers $\alpha_i$ such that $\alpha_1\alpha_2\cdots \alpha_{n-2}=1$ to write the identity
\be
\prod_{i=1}^{n-2}\Delta_{a_ib_ic_i} =\frac{1}{(n-2)!}\left(\sum_{i=1}^{n-2}\alpha_i \Delta_{a_ib_ic_i}\right)^{n-2},\label{power-identity}
\ee
where we used that $\Delta$'s commute and square to zero. Motivated by the expression of $\Delta_{abc}$ in terms of $u_{ab}$ given in \eqref{exu}, it is natural to choose $\alpha_i$'s such that as many $u_{ab}$'s can be canceled as possible.

Consider the simplest non-trivial example at $n{=}4$ for which one choice of triples is ${\cal T} = \{(1,2,3),(1,3,4)\}$ so that
\be
{\rm LS}_{\cal T} = \Delta_{123}\Delta_{134} = \frac{1}{2}\left(\alpha_1(u_{12}+u_{23}+u_{31})+ \alpha_2(u_{13}+u_{34}+u_{41})\right)^2
\ee
Since none of the $\alpha$'s can vanish, the only $u_{ab}$ that can be eliminated is $u_{13}$ whose coefficient is $\alpha_2-\alpha_1$ (recall that $u_{31}=-u_{13}$). Therefore asking this to cancel leads to $\alpha_1 = \alpha_2 = \pm 1$ and to a very illuminating formula
\be\label{four}
\Delta_{123}\Delta_{134} =\frac{1}{2}(u_{12}+u_{23}+u_{34}+u_{41})^2.
\ee
This form has a cyclic symmetry that the triples obscured. In particular, it means that
\be
\Delta_{123}\Delta_{134} = \Delta_{234}\Delta_{241},
\ee
which is known as the square move in the on-shell diagram representation of MHV leading singularities.

Given these properties it is clear that $u_{ab}$ should be associated with an oriented edge $E_{ab}$ of a graph connecting vertices $a$ and $b$, and hence $\Delta_{abc}$ should be associated to the boundary $\partial T_{abc} = E_{ab} + E_{bc} + E_{ca}$ of an oriented triangle (2-simplex) $T_{abc}$,
\begin{align}
u_{ab} &\;\leftrightarrow\; E_{ab},\nonumber\\
\Delta_{abc} &\;\leftrightarrow\; \partial T_{abc}.\label{triangle-identification}
\end{align}

Generalizing the LHS of the four-point formula \eqref{four} to any $n$ is trivial if we want to preserve the cyclic symmetry by replacing a square by an $n$-gon
\be\label{polyk}
\frac{1}{(n-2)!}(u_{12}+u_{23}+u_{34}+u_{45}+\ldots +u_{n-1,n}+u_{n1})^{n-2}.
\ee
It is clear that there are as many representation of this object in term of products of $n{-}2$ $\Delta$'s as the number of triangulations of an $n$-gon, the Catalan number $C_{n-2}$. 


The rational functions associated with these polygons are the most basic examples of MHV leading singularities and are known as Parke--Taylor factors \cite{Parke:1986gb}. Once all integrations over the Grassmann variables are done the final result is \cite{Arkani-Hamed:2014bca}:
\be
{\rm PT}(1,2,\ldots, n):= \frac{1}{x_{12}x_{23}\cdots x_{n1}}.\label{Parke-Taylor}
\ee
Here we used the opportunity of introducing the notation for a Parke--Taylor factor.

Motivated by the above discussion we make the following definition.
\begin{definition}\label{move-definition}
Two distinct sets of $n{-}2$ triples ${\cal T}_1$ and ${\cal T}_2$ are said to be \textit{equivalent} if they give rise to the same leading singularity, up to a sign,
\be
\text{LS}_{{\cal T}_1} = \pm\,\text{LS}_{{\cal T}_2}.
\ee
By equation \eqref{power-identity}, a sufficient condition for this to happen is that there exist non-zero coefficients $\alpha_\tau,\beta_\tau $ such that
\be
\sum_{\tau \in {\cal T}_1} \alpha_\tau \Delta_{\tau_1,\tau_2,\tau_3} = \sum_{\tau \in {\cal T}_2} \beta_\tau \Delta_{\tau_1,\tau_2,\tau_3}.
\ee
\end{definition}

In the sequel we explore a particular infinite family of equivalences between non-planar on-shell diagrams which have a nice geometric interpretation.

\subsection{The Sphere Move}

The first example of a new move was found by one of the authors in \cite{Early:2018mac}, where the $6$-point non-planar leading singularity associated with triples $\mathcal{T} = \{(1,2,3),(3,4,5),(5,6,1),(6,4,2)\}$ was found to be invariant under a cyclic shift of labels, $i \to i+1$. This property does not follow from square moves as none are possible for this set of triples. Using the results of the previous discussions it is easy to see the cyclic property by writing
\be
\text{LS}_{\cal T}= \Delta_{123}\Delta_{345}\Delta_{561}\Delta_{642} = \frac{1}{4!}\left(\Delta_{123}+\Delta_{345}+\Delta_{561}+\Delta_{642} \right)^4
\ee
and using that 
\begin{align}
\Delta_{123}+\Delta_{345}+\Delta_{561}+\Delta_{642} &= u_{12}{+}u_{23}{+}u_{34}{+}u_{45}{+}u_{56}{+}u_{61}{+}u_{15}{+}u_{26}{+}u_{31}{+}u_{42}{+}u_{53}{+}u_{64}\nonumber\\
&= \Delta_{234} + \Delta_{456} + \Delta_{612} + \Delta_{153}.\label{octahedral-move}
\end{align}
In other words, the graph given by the $u_{ab}$'s is the oriented circulant graph $C_6(1,2)$ which is clearly cyclic invariant.

Let us explain how the square move and the one just discussed are the first two examples of a family of moves.

Consider a triangulation of a sphere. The simplest one gives rise to a tetrahedron. Each triangle, $T_{abc}$, has an orientation induced by that of the sphere. Let the vertices of the tetrahedron be labelled $\{1,2,3,4\}$. The oriented triangles are $\{T_{123},T_{134},T_{142},T_{243}\}$. Clearly, the boundary of the surface is empty. Using the boundary map $\partial T_{abc} = E_{ab} + E_{bc} + E_{ca}$ together with the identification \eqref{triangle-identification} one finds that
\be
(\Delta_{123}+\Delta_{134}) - (\Delta_{124}+\Delta_{234}) = 0.\label{square-move}
\ee
This is nothing but the square move, see Figure~\ref{fig:sphere-move}. It is clear that \eqref{octahedral-move} follows from a similar reasoning for a triangulation of a sphere containing the triangles $\{ T_{123}, T_{345}, T_{561}, T_{642}, T_{243},\allowbreak T_{465}, T_{621}, T_{135}\}$.

\begin{figure}[h!]
	\centering
	\includegraphics[scale=0.9]{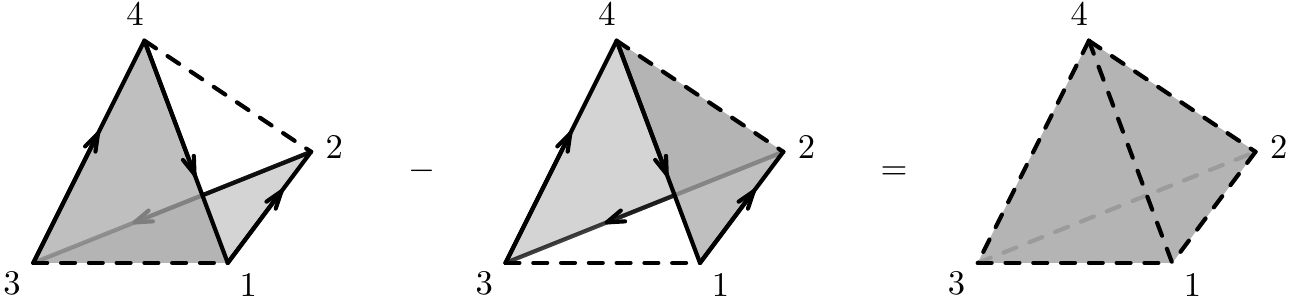}
	\caption{\label{fig:sphere-move}Square move: the simplest example of a sphere move.}
\end{figure}

A general triangulation of the sphere has $F$ triangular faces, $E =3F/2$ edges and $V=n$ vertices satisfying $V-E+F=2$. This gives exactly $F=2(n{-}2)$ triangles. One can then split the set of triangles into two disjoints sets of $n{-}2$ triangles. Each set gives rise to a list of $n{-}2$ triples and therefore to a leading singularity or to something that vanishes. In either case, one finds that both must give rise to the same object, up to a sign, as a consequence of the boundary of the surface of the sphere being empty. This is why we call these families of identities \emph{sphere} moves. Sphere moves can also be performed locally on a subset of triples from the two on-shell diagrams. This motivates the following definition.

\begin{definition}
Two sets of triples ${\cal T}_1$ and ${\cal T}_2$ are said to be related by a \emph{sphere move} of type ${\cal S}$, for a triangulation ${\cal S}$ of a $2$-sphere, if
\be\label{definition-sphere-move}
\sum_{\tau \in {\cal T}_1} T_{\tau_1, \tau_2, \tau_3} - \sum_{\tau \in {\cal T}_2} T_{\tau_1, \tau_2, \tau_3} = \sum_{\tau \in \mathcal{S}} T_{\tau_1, \tau_2, \tau_3},
\ee
where $T_{abc}$ is a triangle corresponding to the triple $(a,b,c)$. In particular, by the identification \eqref{triangle-identification} it implies that
\be
\sum_{\tau \in {\cal T}_1} \Delta_{\tau_1, \tau_2, \tau_3} = \sum_{\tau \in {\cal T}_2} \Delta_{\tau_1, \tau_2, \tau_3},
\ee
since the boundary of a sphere is empty. Note that here the number of triangles in ${\cal S}$ can be smaller than that of ${\cal T}_1 \cup {\cal T}_2$, i.e., sphere moves can be performed locally.
\end{definition}

We will often say that two sets of triples are related by a sphere move if there exists any ${\cal S}$ satisfying the above criteria.

At four points the only sphere move is the square move, see Figure~\ref{fig:sphere-move}. At five points one has the triangular bipyramid with triangles $\{(1,3,5),(4,3,1),(1,2,4),(3,4,2),(3,2,5),(1,5,2)\}$. One possible identity is obtained by splitting $\{(1,3,5),(4,3,1),(3,4,2)\}$ and $\{(1,2,4),(3,2,5),\allowbreak (1,5,2)\}$. Both give rise to the Parke--Taylor factor ${\rm PT}(1,5,3,2,4)$. Of course, this can be obtained by a sequence of two square moves and therefore it is not a new identity. Examples of sphere moves beyond the square one start at six points, which is what we will discuss next.

\subsection{Further Examples of Sphere Moves}

We start by giving an infinite family of examples that generalize the sphere move on the octahedral leading singularity in a natural way into bipyramids with more faces. Let us consider the following set of triples for $n$ even:
\begin{align}\label{bipyramid-move-1}
{\cal T}_1 = \{ &(1,2,n{-}1), (3,4,n{-}1), \ldots, (n{-}3,n{-}2,n{-}1),\nonumber\\
 &(3,2,n), (5,4,n), \ldots, (1,n{-}2,n)\}.
\end{align}
It is related by a sphere move to the following set of triples obtained by exchanging $n{-}1 \leftrightarrow n$:
\begin{align}\label{bipyramid-move-2}
{\cal T}_2 = \{ &(3,2,n{-}1), (5,4,n{-}1), \ldots, (1,n{-}2,n{-}1),\nonumber\\
&(1,2,n), (3,4,n), \ldots, (n{-}3,n{-}2,n)\}.
\end{align}
We illustrate this sphere move in Figure~\ref{fig:bipyramid-move}.

\begin{figure}[h!]
	\centering
	\includegraphics[scale=0.9]{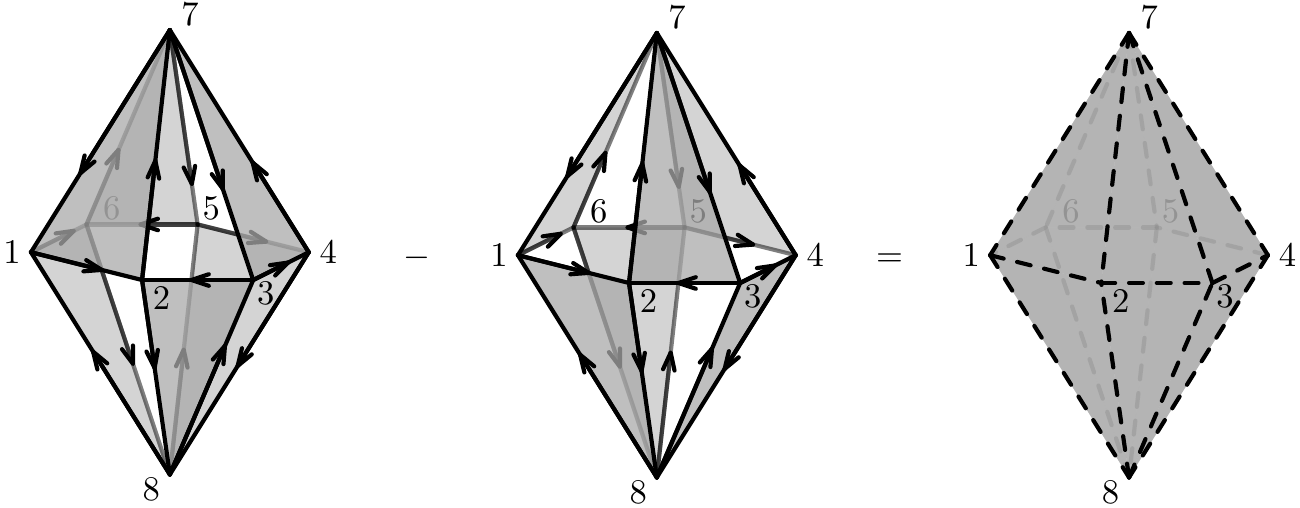}
	\caption{\label{fig:bipyramid-move}
		Sphere move between \eqref{bipyramid-move-1} and \eqref{bipyramid-move-2} for $n{=}8$. The corresponding sets of triples are ${\cal T}_1 = \{ (1,2,7), (3,4,7), (5,6,7), (3,2,8), (5,4,8), (1,6,8)\}$ and ${\cal T}_2 = \{ (3,2,7), (5,4,7), (1,6,7),\allowbreak (1,2,8),\allowbreak (3,4,8),\allowbreak (5,6,8)\}$.
	}
\end{figure}

The next family we describe is given by the set of triples, again for even $n=2m$:
\be\label{drum-move-1}
{\cal T}_1 = \{ (1,2,\ldots,m), (m{+}1,m{+}2,2), (m{+}2,m{+}3,3), \ldots, (2m,m{+}1,1)\}.
\ee
Here we abused the notation by using $(1,2,\ldots,m)$ in the first slot to denote an arbitrary triangulation of an $m$-gon with $m{-}2$ triples. The total number of triples in \eqref{drum-move-1} is therefore $n{-}2$, as expected for $n$ labels. It is related by a sphere move to the following set:
\be\label{drum-move-2}
{\cal T}_2 = \{(1,2,m{+}1), (2,3,m{+}1), \ldots, (m,1,2m), (m{+}1, m{+}2, \ldots, 2m) \},
\ee
which can be obtained from the original one by relabelling $i \to 2m{-}i{+}1$ and flipping orientations of all triples.
We give an example of this sphere move in Figure~\ref{fig:drum-move}.

\begin{figure}[h!]
	\centering
	\includegraphics[scale=0.9]{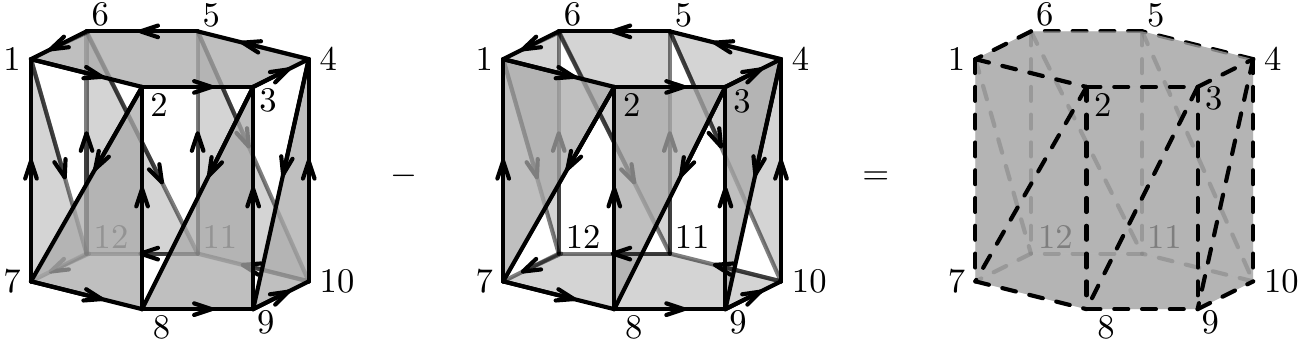}
	\caption{\label{fig:drum-move}
		Sphere move between \eqref{drum-move-1} and \eqref{drum-move-2} for $n{=}12$. The corresponding sets of triples are ${\cal T}_1 = \{ (1,2,3,4,5,6), (7,8,2), (8,9,3), (9,10,4), (10,11,5), (11,12,6), (12,7,1)\}$ and ${\cal T}_2 = \{ (1,2,7),\allowbreak (2,3,8),\allowbreak (3,4,9),\allowbreak (4,5,10), (5,6,11), (6,1,12), (7,8,9,10,11,12) \}$.
	}
\end{figure}

Clearly, one can generate an infinite number of new sphere moves by starting with an arbitrary triangulation (not necessarily invariant under any relabeling) of a sphere with $2(n{-}2)$ triangles and dividing it into two sets of $n{-}2$, which yields an equality between two leading singularities. We leave such classification questions to future work.

Let us close with an example of a sphere move that is performed locally and does not involve all the $n{-}2$ triangles. Let us start by considering the sets of triples
\begin{align}
{\cal T}_1  = \{(1, 2, 3), (3, 4, 5), (5, 6, 1), (6, 4, 2), (7, 8, 9), (9, 10, 6), (6, 5, 7), (8, 5, 10)\},\nonumber\\
{\cal T}_2 = \{(2,3,4),(4,5,6),(6,1,2),(1,5,3),(7, 8, 9), (9, 10, 6), (6, 5, 7), (8, 5, 10)\}.
\label{composite-sphere-move}
\end{align}
Applying the definition \eqref{definition-sphere-move} between ${\cal T}_1$ and ${\cal T}_2$ we find:
\begin{align}
\sum_{\tau \in {\cal T}_1} T_{\tau_1, \tau_2, \tau_3} - \sum_{\tau \in {\cal T}_2} T_{\tau_1, \tau_2, \tau_3} &= T_{123} + T_{345} + T_{561} + T_{642} - T_{234} - T_{456} - T_{612} - T_{153} \nonumber\\
&= \sum_{\tau \in \mathcal{S}} T_{\tau_1, \tau_2, \tau_3}.\label{two-octahedra-3}
\end{align}
Notice that in the first equality several terms have cancelled, such that only those involving the labels $\{1,2,3,4,5,6\}$ appear. The result is an octahedral triangulation $\mathcal{S}$ of a sphere with $6$ vertices, $8$ edges, and $8$ triangles. Hence ${\cal T}_1$ and ${\cal T}_2$ are connected by a sphere move, in analogy with the one given in \eqref{octahedral-move}.

Consider now another set of triples, given by
\be
{\cal T}_3 = \{(2,3,4),(4,5,6),(6,1,2),(1,5,3), (8,9,10),(10,6,5),(5,7,8),(9,7,6)\}.
\ee
The difference between two sets of triangles corresponding to ${\cal T}_2$ and ${\cal T}_3$ is
\begin{align}
\sum_{\tau \in {\cal T}_2} T_{\tau_1, \tau_2, \tau_3} - \!\! \sum_{\tau \in {\cal T}_3} T_{\tau_1, \tau_2, \tau_3} &= T_{7,8,9} + T_{9,10,6} + T_{6,5,7} + T_{8,5,10} - T_{8,9,10} - T_{10,6,5} - T_{5,7,8} - T_{9,7,6} \nonumber\\
&= \sum_{\tau \in \mathcal{S}} T_{\tau_1, \tau_2, \tau_3},\label{two-octahedra-4}
\end{align}
where once again the labels surviving on the RHS are only those from the set $\{5,6,7,8,9,10\}$ and they can also be understood as a triangulation of a sphere. Hence ${\cal T}_2$ and ${\cal T}_3$ are connected by a sphere move. As a result of this ${\cal T}_1$ and ${\cal T}_3$ are in fact equivalent. Note that each of them can be obtained by gluing two octahedra by an edge. Hence the corresponding triangles cannot be embedded on a surface of a sphere. 

Alternatively, one can arrive at the set of triples ${\cal T}_3$ from ${\cal T}_1$ by a composition of two different sphere moves. Using the intermediate set
\be
\widetilde{\cal T}_2 = \{(1, 2, 3), (3, 4, 5), (1,7,6), (6, 4, 2), (7, 8, 9), (9, 10, 6), (1,5,7), (8, 5, 10)\}
\ee
we find:
\begin{align}\label{two-octahedra-1}
\sum_{\tau \in {\cal T}_1} T_{\tau_1, \tau_2, \tau_3} - \sum_{\tau \in \widetilde{\cal T}_2} T_{\tau_1, \tau_2, \tau_3} &= T_{561} + T_{657} - T_{176} - T_{157} = \sum_{\tau \in \mathcal{S}} T_{\tau_1, \tau_2, \tau_3},
\end{align}
where we recognize that ${\cal S}$ is a tetrahedral triangulation of a sphere, i.e., ${\cal T}_1$ and $\widetilde{\cal T}_2$ are connected by a square move in the labels $\{1,5,7,6\}$, similar to the one given in \eqref{square-move}. Let us consider the difference between $\widetilde{\cal T}_2$ and ${\cal T}_3$:
\begin{align}
\sum_{\tau \in \widetilde{\cal T}_2} T_{\tau_1, \tau_2, \tau_3} - \sum_{\tau \in {\cal T}_3} T_{\tau_1, \tau_2, \tau_3} &= T_{1,2,3} + T_{3,4,5} + T_{1,7,6} + T_{6,4,2} + T_{7,8,9} + T_{9,10,6} + T_{1,5,7} + T_{8,5,10} \nonumber\\
&\quad - T_{2,3,4}\! - T_{4,5,6}\! - T_{6,1,2}\! - T_{1,5,3}\! - T_{8,9,10}\! - T_{10,6,5}\! - T_{5,7,8}\! - T_{9,7,6}\nonumber\\
&= \sum_{\tau \in \mathcal{S}} T_{\tau_1, \tau_2, \tau_3}.\label{two-octahedra-2}
\end{align}
Here the RHS involves all the available labels and ${\cal S}$ is a triangulation of a sphere with $10$ vertices, $24$ edges, and $16$ triangles. Hence $\widetilde{\cal T}_2$ and ${\cal T}_3$ are connected by a sphere move, which shows equivalence between ${\cal T}_1$ and ${\cal T}_3$ by a composition of two sphere moves. We illustrate this procedure in Figure~\ref{fig:composite-sphere-move}.

\begin{figure}[h!]
	\centering
	\includegraphics[scale=0.88]{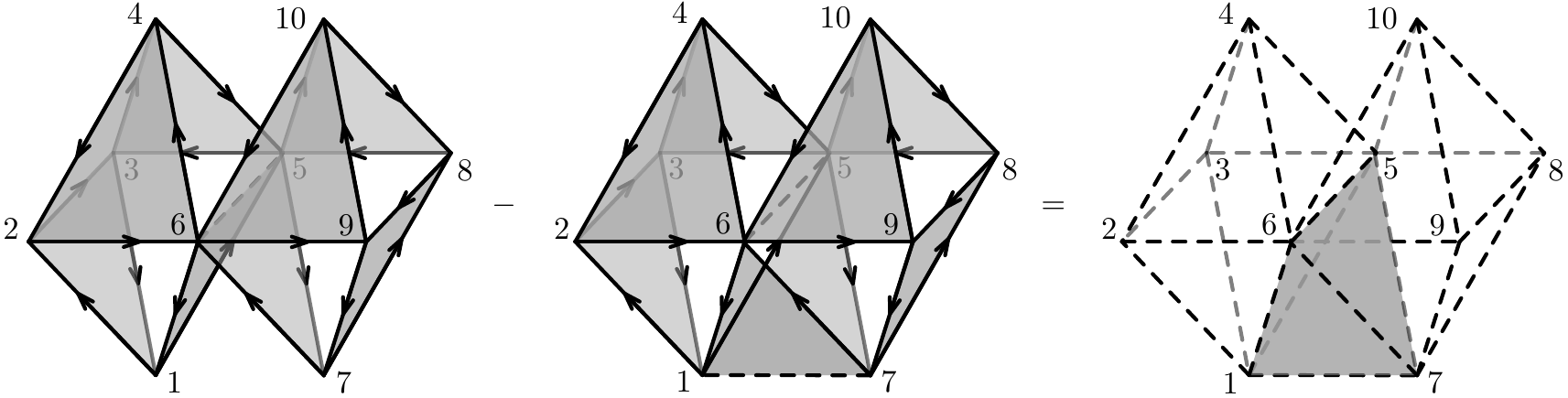}
	\includegraphics[scale=0.88]{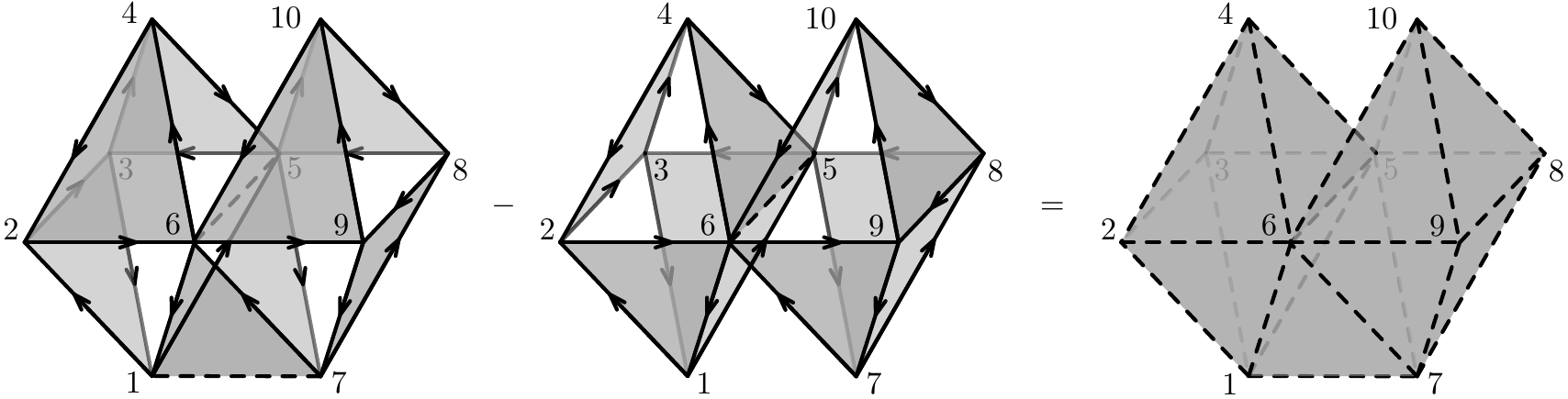}
	\caption{\label{fig:composite-sphere-move}A sequence of two sphere moves between the set of triangles from \eqref{composite-sphere-move}. Top: sphere (square) move on the labels $\{1,5,7,6\}$ from \eqref{two-octahedra-1}. Bottom: sphere move on all the labels from \eqref{two-octahedra-2}.}
\end{figure}

Indeed, it can be confirmed directly from the determinant formula \eqref{LS-map} that the rational functions $F_{{\cal T}_i}$ for $i=1,2,3$ are all equal to
\be
-\frac{\left(x_{15} x_{26} x_{34}-x_{16} x_{24} x_{35}\right)^2 \left(x_{58}
   x_{6,10} x_{79}-x_{5,10} x_{69} x_{78}\right)^2}{x_{12} x_{13} x_{15} x_{16}
   x_{23} x_{24} x_{26} x_{3,4} x_{3,5} x_{45} x_{46} x_{57} x_{58} x_{5,10}
   x_{67} x_{69} x_{6,10} x_{78} x_{79} x_{89} x_{8,10} x_{9,10}}.\label{two-octahedra-rtnl}
\ee
The numerator factors, which is a consequence of the fact that the sets of triples can be obtained by gluing two leading singularities. In fact, in Appendix~\ref{app:gluing} we show that this is a generic property of any leading singularity that contains a smaller one inside of it.

\section{\label{sec:identities}Parent Identities for Triples}

In this section we present new identities among sets of triples, when realized in terms of $\Delta$ variables. These identities are the lower order analogues of well-known relations between leading singularities. In fact, we show that these correspond to parent identities in the sense that the $U(1)$ decoupling, Bern--Carrasco--Johansson (BCJ) \cite{Bern:2008qj} and Kleiss--Kuijf (KK) relations \cite{Kleiss:1988ne} follow almost directly from the low order relations. The scope of these relations between $\Delta$ objects is however far more general. We end section presenting new extensions of the $U(1)$ decoupling identity involving different types of leading singularities.
We will concentrate on the study and applications of expressions of the form
\be
\, \left( \sum_{i=1}^{n-2}\alpha_i \Delta_{a_ib_ic_i} \right)^j
\label{eq:powerj}
\ee
We will interpret the $x_i$ variables appearing in the $\Delta$'s as the location of punctures on a $\mathbb{CP}^1$. The power $j= n{-}2$ of this expression can be taken as an
integrand for the CHY construction \cite{Cachazo:2013hca}. Relations among integrands are then mapped to relations among amplitudes in theories which admit a CHY representation.

\subsection{Decoupling of an Internal Label}

We have discussed how different on-shell diagrams can give rise to the same leading singularity function. It is also known that leading singularity functions are not linearly independent. The simplest relation is the so-called $U(1)$ decoupling identity for $n=4$. This is the following three-term identity
\be
{\rm PT}(1234) + {\rm PT}(2314) + {\rm PT}(3124) = 0.
\ee
It is natural to ask for a $\Delta$-algebra interpretation of these kind of relations. Using the following representation of the left hand side in terms of $\Delta$'s reveals the general structure
\be\label{Uden}
\Delta_{124}\Delta_{234} + \Delta_{234}\Delta_{314} + \Delta_{314}\Delta_{124} = \frac{1}{2}( \Delta_{124}+\Delta_{234}+\Delta_{314} )^2.
\ee
Now it is obvious why this has to vanish. Using the triangulation property \eqref{eq:triangulationmain}, i.e.,
\be
\Delta_{124}+\Delta_{234}+\Delta_{314} = \Delta_{123}\,,
\ee
we conclude that its square vanishes.

In addition to making the identity a consequence of a simple cancellation, this discussion also reveals an important structure. Given the one-element list of triples ${\cal T} = \{(123)\}$ its leading singularity is simply $\Delta_{123}$. In \eqref{Uden} we found that $\Delta_{123}^2/2$ is also useful for deriving identities. This suggests that one can talk about leading singularities and identities in a unified way by introducing the following exponential formula
\be
{\rm LS}_{\cal T} = {\rm exp}(\kappa \Delta_{123})\Big|_{\kappa}.
\ee
Indeed, following \cite{Early:2018mac}, we propose that associated with any list of triples one should construct an exponential representation so that
\be
{\rm LS}_{\cal T} = \left.{\rm exp}\left(\kappa \sum_{i=1}^{n-2}\alpha_i \Delta_{a_ib_ic_i}\right)\right|_{\kappa^{n-2}}.
\ee
While it is clear that the coefficient of $\kappa^{n-2}$ computes the leading singularity function, the previous example shows that other orders of the expansion contain valuable information as well. We investigate these orders in the next section.

\subsubsection{$U(1)$ Decoupling}

Let us start with the $U(1)$ decoupling identity. The name of the identity refers to the fact that it is a property of a $U(N)$ Yang--Mills theory where one particle is taken to be in $U(1)\subset U(N)$. The Lagrangian of a $U(N)$ theory splits into two parts, one for the $U(1)$ ``photon'' and one for the $SU(N)$ ``gluons''. There is no cross term and therefore no coupling among the photon and the gluons, i.e., the photon is decoupled. This means that the scattering amplitude of $n$ gluons and a photon must vanish, see, e.g., \cite{Dixon:1996wi} for a review.

Let us start with the usual Parke--Taylor interaction of $n$ gluons described at the linear order by the $n$-gon
\be
d_{1,2,\ldots ,n} :=u_{12}+u_{23}+\cdots +u_{n-1,n}+u_{n1}.
\label{eq:linpt}
\ee
Clearly, the power $d_{1,2,\ldots, n }^{n-2}$, integrated over the Grassmann variables, computes the Parke--Taylor factor $\text{PT}(1,2,\ldots,n)$ as in \eqref{Parke-Taylor}. This combination of signed edges, $u_{ab}$, can be written in terms of the triangle boundaries $\Delta_{abc}$ in many ways, in particular, one can choose any triangulation of the $n$-gon. 

Now we want to include the photon as the particle $n{+}1$ in a way that manifestly decouples. This is done by realizing that any point in the interior of the $n$-gon can be used to define a triangulation. Since the point is in the interior nothing depends on it and it is ``decoupled''. Denoting the interior point as $n+1$ one finds
\be
u_{12}+u_{23}+\ldots +u_{n-1,n}+u_{n1} = \Delta_{1,2,n+1}+\Delta_{2,3,n+1}+\cdots +\Delta_{n-1,n,n+1}+\Delta_{n,1,n+1}.
\label{eq:lineardecoupling}
\ee
The $(n{-}1)^{\text{th}}$ power vanishes as the left hand side can be written in terms of $n{-}2$ $\Delta$'s, e.g.,
\be
u_{12}+u_{23}+\ldots +u_{n-1,n}+u_{n1} =
\Delta_{1,2,n}+\Delta_{2,3,n}+\cdots + \Delta_{n-2,n-1,n}\,.
\label{eq:lineardecoupling2}
\ee
and therefore saturates at the power $n{-}2$. Expanding the right hand side of \eqref{eq:lineardecoupling} gives the identity we are looking for. Note that the right hand side has $n$ $\Delta$'s and therefore the $(n{-}1)^{\text{th}}$ power is a sum over $n$ terms in which exactly one of the $\Delta$'s is missing. This can be written as
\begin{align}
d^{n-1}_{1,2,\ldots ,n,n+1} &+ d^{n-1}_{1,2,\ldots ,n+1,n} +\cdots +d^{n-1}_{1,2,n+1,3\ldots ,n}+d^{n-1}_{1,n+1,2,\ldots ,n} = 0.
\label{eq:u1dpower}
\end{align}
After Grassmann integration each term gives rise to a $(n{+}1)$-particle Parke--Taylor factor leading to
\begin{align}
{\rm PT}(1,2,\ldots ,n,n{+}1) &+ {\rm PT}(1,2,\ldots ,n{+}1,n)+\nonumber\\
&+\cdots +{\rm PT}(1,2,n{+}1,3\ldots ,n)+{\rm PT}(1,n{+}1,2,\ldots ,n) = 0.
\end{align}

Given the linear realization of the Parke--Taylor factors defined in \eqref{eq:linpt} it is natural to ask if there is an analogue of \eqref{eq:u1dpower} for lower orders. The answer is yes. Indeed, let us now show how to generalize \eqref{eq:u1dpower} to all powers of $d$'s. Denoting $d_i := d_{1,\ldots i ,n+1,i+1,\ldots n}$ and using \eqref{eq:lineardecoupling} it is easy to see that we can write 
\begin{equation}
d_i =  d_{1,2,\ldots ,n} + \Delta_{i,n+1,i+1}.
\end{equation}
Now consider the sum 
\begin{align}
e^{\kappa\, d_1}+ e^{\kappa\, d_2}+ \ldots + e^{\kappa\, d_n} &=  e^{\kappa\, d_{1,2,\ldots, n}}\left( e^{\kappa\, \Delta_{1,n+1,2}} + e^{\kappa\, \Delta_{2,n+1,3}}+ \ldots + e^{\kappa\, \Delta_{n,n+1,1}} \right) \nonumber \\
&=  e^{\kappa \, d_{1,2,\ldots, n}}\left( n - \kappa\, d_{1,2,\ldots,n} \right).
\label{eq:expd}
\end{align}
In the second line we used $e^{\kappa \Delta}= 1 + \kappa\, \Delta$. Now, using the fact that $d_{1,2,\ldots,n}^{n-1}=0$ we conclude that the right hand side is at most of order $\kappa^{n-2}$. Extracting the order $\kappa^{n-1}$ in the left hand side then recovers the standard $U(1)$ decoupling \eqref{eq:u1dpower}. However, we now see that this is just the first of a family of identities that follow from extracting the lower orders in $\kappa$. At order $\kappa^j$ we have
\begin{align}
d^{j}_{1,2,\ldots ,n,n+1} &+ d^{j}_{1,2,\ldots ,n+1,n} +\cdots +d^{j}_{1,2,n+1,3\ldots ,n}+d^{j}_{1,n+1,2,\ldots ,n} = (n-j)\, d_{1,2,\ldots ,n}^j.
\end{align}

Because these relations follow from an straightforward application of \eqref{eq:lineardecoupling} we refer to the latter as the ``parent identity'' for $U(1)$ decoupling. Next we study similar identities for BCJ and KK relations.

\subsubsection{Fundamental BCJ Identity}

Here we present the derivation of the fundamental Bern--Carrasco--Johansson relations \cite{Bern:2008qj}. In theories which admit a BCJ representation, these relations reduce the number of independent partial amplitudes, corresponding to Parke--Taylor factors, to $(n{-}3)!$. Here these identities arise naturally from a generalization of the decoupling procedure at the linear level in $\Delta$.  

We are now in position to derive BCJ relations from a generalization of the decoupling identity \eqref{eq:lineardecoupling}. We will need to relate the coefficients $\alpha_i$ in the general form \eqref{eq:powerj} to kinematic invariants. The starting point is to note that the decoupling condition of particle $n{+}1$ does \textit{not} require the coefficients of
\be
\Delta_{1,2,n+1}+\Delta_{2,3,n+1}+\cdots +\Delta_{n-1,n,n+1}+\Delta_{n,1,n+1}
\ee
to be necessarily one. In order to find the most general setup for this to happen, let us consider the linear combination
\be\label{genkin}
  A_{12}\Delta_{1,2,n+1}+A_{23}\Delta_{2,3,n+1}+\cdots +A_{n-1,n}\Delta_{n-1,n,n+1}+A_{n1}\Delta_{n,1,n+1} = I_{n+1} + \Gamma_{n+1}\,,
\ee
where have decomposed the LHS into two parts. In the term  $\Gamma_{n+1}$ we have collected the dependence on particle $n{+}1$, i.e.,
\be\label{comB}
\Gamma_{n+1}:=(A_{12}-A_{n1})u_{n+1,1} + (A_{23}-A_{12})u_{n+1,2}+\cdots +(A_{n1}-A_{n-1,n})u_{n+1,n},
\ee
whereas the rest reads
\be\label{comd}
I_{n+1} := A_{12}\, u_{12}+A_{23}\,u_{23}+\cdots +A_{n-1,n}\, u_{n-1,n}+A_{n1}\,u_{n1}.
\ee
The linear expression \eqref{genkin} will be the parent identity generating BCJ relations (and other new relations among CHY integrands), provided the coefficients $A_{i,i+1}$ satisfy a decoupling condition. Here we will be mainly interested in the combination \eqref{genkin} raised to its $(n{-}1)^{\rm th}$ power. 

When computing the leading singularity the only non-trivially zero terms are those for which the Grassmann variables saturate after the gauge fixing term is included. In other words, in the expression
\be
(A_{12}\Delta_{1,2,n+1}+A_{23}\Delta_{2,3,n+1}+\cdots +A_{n-1,n}\Delta_{n-1,n,n+1}+A_{n1}\Delta_{n,1,n+1})^{n-1}\frac{\theta_d\theta_e}{x_{de}}\frac{\chi_f\chi_g}{x_{fg}}
\ee
the only potentially non-zero terms are those proportional to $\prod_{i=1}^{n+1}\theta_i\chi_i$. If one chooses $d,e,f,g$ to be different from $n{+}1$, then the only source of $\theta_{n+1}\chi_{n+1}$ is the combination $\Gamma_{n+1}$ defined in \eqref{comB}. In the following we will thus consider the expansion:
\be
(I_{n+1} + \Gamma_{n+1})^{n-1} = (I_{n+1})^{n-1}+(n-1)(I_{n-1})^{n-2}\Gamma_{n+1} + {\cal O}(\Gamma_{n+1}^2).
\label{eq:expigamma}
\ee
Clearly, the first term in the left hand side vanishes after Grassmann integration as it does not have any factors of $\theta_{n+1}\chi_{n+1}$.

We anticipate that each order in $\Gamma_{n+1}$ will give us a relation among Parke--Taylor factors and other CHY integrands. Let us start by considering the linear part, i.e., the second term in \eqref{eq:expigamma}. The combination $\Gamma_{n+1}$ must provide all the dependence in $\theta_{n+1}\chi_{n+1}$ and therefore we can drop all other Grassmann variables inside it. Using that $u_{ab}=\theta_{ab}\chi_{ab}/x_{ab}$ one finds
that the relevant piece of $\Gamma_{n+1}$ is
\be
 \left(\frac{(A_{12}-A_{n1})}{x_{n+1,1}} + \frac{(A_{23}-A_{12})}{x_{n+1,2}}+\cdots +\frac{(A_{n1}-A_{n-1n})}{x_{n+1,n}}\right)\theta_{n+1}\chi_{n+1}.
\ee
This means that in order to have an identity among Parke--Taylor functions it is necessary for this to vanish. More explicitly, if
\be\label{seq}
\frac{(A_{12}-A_{n1})}{x_{n+1,1}} + \frac{(A_{23}-A_{12})}{x_{n+1,2}}+\cdots +\frac{(A_{n1}-A_{n-1,n})}{x_{n+1,n}} = 0,
\ee
then
\begin{align}
\frac{1}{A_{n1}}{\rm PT}(1,2,\ldots ,n,n{+}1) &+ \frac{1}{A_{n-1,n}}{\rm PT}(1,2,\ldots ,n{+}1,n)+\nonumber\\
&+\cdots +\frac{1}{A_{1,2}}{\rm PT}(1,n{+}1,2,\ldots ,n) = {\cal O}(\Gamma_{n+1}^2).\label{fiden}
\end{align}

Clearly, $A_{ab} = 1$ is a solution to \eqref{seq} and gives rise to the standard $U(1)$ decoupling identity since $\Gamma_{n+1}$ vanishes identically.
This system can be written by performing the change of variables $A_{a,i+1}-A_{a-1,i} = s_{n+1,i}$, in which case it reads
\be
\frac{s_{n+1,1}}{x_{n+1,1}} + \frac{s_{n+1,2}}{x_{n+1,2}}+\cdots +\frac{s_{n+1,n}}{x_{n+1,n}} = 0.
\label{eq:scateqn}
\ee
One can easily express the $A_{a, a+1}$ variables in terms of $s_{n+1,a}$ as follows
\be
A_{12}=A_{n1}+s_{n+1,1},\quad A_{23}=A_{n1}+s_{n+1,1}+s_{n+1,2},\quad\ldots,\quad A_{n-1,n}=A_{n1}+\sum_{i=1}^{n-1}s_{n+1,i}.
\ee
Note that this requires $\sum_{i=1}^{n}s_{n+1,i}=0$.
In the context of CHY formalism \cite{Cachazo:2013hca} the condition \eqref{eq:scateqn} is precisely the scattering equation for label $n+1$.  For $m=n+1$ massless particles, the scattering equations relate punctures in $\mathbb{CP}^1$, located at $x_a$, $a=1,2,\ldots, m$, to kinematic invariants. The latter are nicely encoded in a symmetric matrix with components $s_{ab}$, whose diagonal elements are set to zero by the on-shell conditions. The space is $m(m{-}3)/2$ dimensional as one further imposes $m$ momentum conservation conditions $\sum_{b=1}^m s_{ab} = 0$, removing in total $2m$ variables from the $m(m{+}1)/2$ elements in a generic symmetric matrix.
The scattering equations then fix the positions of the punctures $x_{a}$ by imposing
\begin{eqnarray}
\sum_{\substack{a=1\\ a\neq b}}^{m-1} \frac{s_{ab}}{x_{a}-x_{b}}& =&0 , \,\qquad \, b=1,2,\ldots, m. 
\label{eq:scateq}
\end{eqnarray}

We have found that only the momentum conservation condition and the scattering equation corresponding to the $(n{+}1)^{\text{th}}$ particle is needed for the decoupling of such label. Given such conditions we would like to derive relations among Parke--Taylor factors from \eqref{fiden}. This requires us to find a way to ignore ${\cal O}(\Gamma_{n+1}^2)$. In order to achieve this it is necessary to introduce an expansion parameter. This is done by the replacement
$s_{ab}\to \epsilon\, s_{ab}$ and declaring that $A_{n1}=1$. Clearly $\Gamma_{n+1} = {\cal O}(\epsilon)$ and an expansion around $\epsilon=0$ of \eqref{fiden} leads to relations for the leading and next-to-leading order terms.

The leading order term gives $A_{ab}=1$ and becomes the $U(1)$ decoupling identity. The next to leading order term becomes 
\be
\left(\sum_{i=1}^{n-1}s_{n+1,i}\right){\rm PT}(1,2,\ldots,n{-}1,n{+}1,n)+\cdots +s_{n+1,1}{\rm PT}(1,n{+}1,2,\ldots ,n) =0,
\label{eq:bcjfund}
\ee
which is the fundamental BCJ relation \cite{Bern:2008qj} at the level of Parke--Taylor factors. In \cite{Cachazo:2012uq} this relation was in fact shown to hold on the support of the scattering equations.\footnote{Alternatively, CHY integrands can be understood as elements of the graded algebra $\mathsf{B}^\ast$ generated by $1$ and $\omega_{ab} := d\log x_{ab}$, isomorphic to an Orlik--Solomon algebra \cite{orlik2013arrangements}. The intersection pairing between two copies of $\text{H}^{n-3}(\mathsf{B}^\ast, \omega\wedge) \cong \text{H}^{n-3}(\mathcal{M}_{0,n},d{\pm}\omega \wedge)$ \cite{esnault1992cohomology} for $\omega := \sum_{a<b} s_{ab}\,\omega_{ab}$ computes CHY amplitudes \cite{Mizera:2017rqa}.}
 
As we extract higher orders in $\epsilon$, we will see that the higher powers of $\Gamma_{n+1}$ in the expansion \eqref{eq:expigamma} do not vanish. Instead, it turns out that only the vanishing of the linear term, imposed through the scattering equation \eqref{eq:scateqn}, is enough to obtain a new family of relations between the Parke--Taylor factors in \eqref{fiden} and multi-trace-like objects.

\subsubsection{Beyond BCJ Relations}

The derivation of the $U(1)$ decoupling identity and the fundamental BCJ relations presented in the previous section motivates the study of higher orders in the expansion. Here we will illustrate with this the second order identity, i.e., $\mathcal{O}(\epsilon\, ^2)$, obtained from the expansion, and we will leave the study of higher orders to future work.

The second order identity reads:
\begin{align}
&\left(\sum_{i=1}^{n-1}s_{n+1,i}^{2}\right){\rm PT}(1,2,\ldots,n+1,n) +\ldots + s_{n+1,1}^{2}{\rm PT}(1,n+1,2,\ldots,n) \nonumber\\
&\qquad = \Lambda_{1n}\sum_{1\leq p<q<n}
\left( {\textstyle\sum}_{r=p+1}^{q} s_{n+1,r}\right)^{\!2}\,
{\rm PT}(q+1,\ldots p,n+1)\, {\rm PT}(p+1,\ldots q,n+1)\,,
\label{eq:2bcj}
\end{align}
where the combination $\Lambda_{1n} := x_{n,n+1} x_{1,n+1}/x_{1,n}$ is known as the inverse soft factor. In the RHS we choose two arbitrary labels (in this case $1,n$) which are special in that they fix the summation range. The combination appearing in \eqref{eq:2bcj} exhibits a double trace structure. More explicitly, we have
\begin{align}
\mbox{{\rm PT}}(q{+}1,\ldots, p,n{+}1)\mbox{{\rm PT}}(p{+}1,\ldots, q,n{+}1) = \Lambda^{-1}_{p,q+1} \mbox{{\rm PT}}(q{+}1,\ldots, p)\mbox{{\rm PT}}(p{+}1,\ldots, q,n{+}1),
\end{align}
which turns the RHS of \eqref{eq:2bcj} into a sum of double-trace Parke--Taylor factors weighted by (inverse) soft factors. We leave the details of the second order computation for Appendix~\ref{app:PT-identities}. Here instead let us present a five-point example that can be explicitly checked. Taking $n{=}4$ produces:
\begin{align}
&s_{52}^2 \, \text{PT}(1, 5, 2, 3, 4) + (s_{51}{+}s_{52})^2\, \text{PT}(1, 2, 5, 3, 4) + (s_{51}{+}s_{52}{+}s_{53})^2 \, \text{PT}(1, 2, 3, 5, 4) \nonumber\\
&=\Lambda_{14} \Big( \Lambda^{-1}_{13} s_{52}^2 \text{PT}(2 5) \text{PT}(3, 4, 1) 
+\Lambda^{-1}_{14} (s_{52}{+}s_{53})^2 \text{PT}(2, 3, 5)\text{PT}(4, 1) \\
&\qquad\qquad\qquad\qquad\qquad\qquad\qquad\qquad\qquad\qquad + \Lambda^{-1}_{24} s_{53}^2 \text{PT}(3, 5) \text{PT}(4, 1, 2) \Big).\nonumber
\end{align}

Let us now provide an interpretation for these new identities. It is well-known that the fundamental BCJ relations exhausts the relations among Yang--Mills partial amplitudes. However, the string theory derivation of the identities which involves moving a vertex operator around the boundary of a disk \cite{BjerrumBohr:2009rd,BjerrumBohr:2010zs}, implies and infinite number of other identities involving higher-derivative terms in the EFT expansion of the open string. In string theory, the monodromy picked up by passing one vertex operator over another is an exponential of $\epsilon\, s_{ab}$, with $\epsilon$ identified with $\alpha'$. Expanding in $\epsilon$ gives the $U(1)$ decoupling and BCJ identities as the first two others. The ${\cal O}(\epsilon^2)$ contains amplitudes involving $\text{Tr}F^4$ terms, whose CHY formulation was found by He and Zhang in \cite{He:2016iqi}. Let us write down the monodromy relations for these CHY integrands, which hold on the support of the scattering equations:
\begin{align}
\left(\sum_{i=1}^{n-1}s_{n+1,i}^{2}\right)&{\rm PT}(1,2,\ldots,n{+}1,n)+ \ldots  +s_{n+1,1}^{2}{\rm PT}(1,n{+}1,2,\ldots,n)\nonumber\\
&= \sum_{1\leq p < q < r < s\leq n+1} x_{pq}\frac{s_{qr}}{x_{qr}} x_{rs} \frac{s_{sp}}{x_{sp}}\text{PT}(1,2,\ldots,n{+}1) + \ldots,
\end{align}
where the ellipsis on the RHS stands for all the insertions of particle label $n{+}1$ in the ordered set $(1,2,\ldots,n)$. Note that the cross ratio on the RHS is in fact a product of the two soft factors $(\Lambda^{(p)})^{-1}_{qr}\Lambda^{(r)}_{sp}$, with respect to labels $p$ and $r$ respectively.

We have seen that the $\Delta$-algebra parent formula provides a different representation of the monodromy identity. In particular, \eqref{eq:2bcj} exhibits the same soft behaviour on the particle $n{+}1$ on both sides of the equation. We leave the discussion of higher orders in $\epsilon\, s_{ab}$ as well as their explicit equivalence to monodromy relations for future work.

\subsection{KK Relations and Identities for Shuffle Sums}

There is another direction in which the $U(1)$ decoupling identity can be generalized. It is well-known that at the level of Parke--Taylor factors $U(1)$ decoupling is the simplest example of a more general set of relations known as KK identities. In its standard form, the KK relations are
\be
\text{PT}(1,\{\alpha\},n,\{\beta\})=(-1)^{|\beta|}\sum_{\omega\in \alpha \shuffle \beta^{T}}\text{PT}(1,\{\omega\},n).\label{kkrel}
\ee

The shuffle product $\alpha \shuffle \beta^{T}$ corresponds to all the permutations of the labels in $\{\alpha\} \cup \{\beta^T\}$ such that the respective orderings of $\alpha$ and $\beta^T$ (i.e., the transpose permutation of $\beta$) are preserved. One can now ask how the sum over the permutations in $\alpha \shuffle \beta^{T}$ is realized at the level of sums of triples. It is the goal of this section to provide a linear version of \eqref{kkrel}, as well as other useful identities at various orders in $\Delta$'s. The relations \eqref{kkrel} will then be a direct consequence of such parent identities. 

In order simplify the notation, let us
take $\alpha=(2,3,\ldots,m)$ and $\beta=(m{+}1,m{+}2,\ldots,n{-}1)$.
Such orderings correspond to the following object at the linear level:
\begin{align}
d:= d_{1,2,\ldots,m,n,m+1,\ldots,n-1} 
 =& \Delta_{123}+ \ldots+\Delta_{1mn}+\Delta_{1,n,m+1}+\ldots+\Delta_{1,n-2,n-1}\,.\label{eq:ptd}
\end{align}

We will also denote by $d_{\omega_i}$ the shuffles of $d$, in the case where $\omega_i$ is a permutation in $(2,\ldots,m)\shuffle(n{-}1,\ldots,m{+}1)$. In this setup, the realization of the KK relations \eqref{kkrel} is
\begin{equation}
\sum_{i=1}^{\binom{n-2}{m-1}}d_{\omega_{i}}^{n-2}=(-1)^{n-m-1}d^{n-2}.\label{eq:KKd}
\end{equation}

We are now in position to ask if there exists any identity at linear order between $\{d_{\omega_{i}}\}$
and $d$ giving rise to the physical KK relation, in an analogous way to the $U(1)$ and BCJ relation. To see this, let us split the RHS of \eqref{eq:ptd} as
\begin{equation}
\boxempty=\Delta_{123}+\mbox{\ensuremath{\ldots}+\ensuremath{\Delta}}_{1mn}\,\qquad\hat{\boxempty}=\Delta_{1,n,m+1}+\ldots+\Delta_{1,n-2,n-1}\,,
\end{equation}
i.e., $d=\boxempty+\hat{\boxempty}$. Note that these objects each
takes $(m-1)$ and $(n-m-1)$ triangles, respectively. It turns out
that for each shuffle $\omega$ one can define an induced splitting
of $d_{\omega}$, that is
\begin{equation}
d_{\omega}=\boxempty_{\omega}-\hat{\boxempty}=\boxempty-\hat{\boxempty}_{\omega}\,.\label{eq:ddecomp}
\end{equation}

We show in Appendix~\ref{app:shuffle} that each simplicial complex in $\boxempty_{\omega}$ (respectively $\hat{\boxempty}_{\omega}$)
is composed of $m{-}1$ (respectively $n{-}m{-}1$) triangles, and that we
have
\begin{equation}
\sum_{i=1}^{\binom{n-2}{m-1}}\boxempty_{\omega_{i}}=\binom{n-3}{m-2}\times d\,.\label{eq:linearbox}
\end{equation}
Furthermore, we also show that the following relations between powers
of $\boxempty_{\omega}$ hold:
\begin{align}
\sum_{i=1}^{\binom{n-2}{m-1}}\boxempty_{\omega_{i}}^{m-1} & = d^{m-1}\,,\label{eq:powerd}\qquad \sum_{i=1}^{\binom{n-2}{m-1}}\hat{\boxempty}_{\omega_{i}}^{n-m-1} = d{}^{n-m-1}\,.
\end{align}
Let us first focus on the linear relation \eqref{eq:linearbox}. Using the decomposition \eqref{eq:ddecomp} we find
\begin{equation}
\sum_{i=1}^{\binom{n-2}{m-1}}d_{\omega_{i}}=\binom{n-3}{m-2}\times\boxempty-\binom{n-3}{n-m-2}\times\hat{\boxempty}\,.\label{eq:linearkk}
\end{equation}
This is the reflection of the KK identity \eqref{eq:KKd} at the linear
order in $\Delta$. For instance, for $n=6$ and $m=3$ this identity reads:
\be
d_{123546}+d_{125346}+d_{152346}+d_{125436}+d_{152436}+d_{154236}=3\boxempty-3\hat{\boxempty},
\label{eq:kkex1}
\ee
where 
\be
\boxempty=\Delta_{123}+\Delta_{136}\,,\qquad\hat{\boxempty}=\Delta_{164}+\Delta_{145}\,.
\ee
The relative minus sign appearing
in the RHS of \eqref{eq:kkex1} is the reflection of a fact pointed
out in \cite{Arkani-Hamed:2014bca} at the level of
Parke--Taylor factors. There, the KK relation was derived by considering the triples 
\be
\{(123),(134),\ldots,(1,m{-}1,m),(1,m,n),(1,n,m{+}1),(1,m{+}1,m{+}2),\ldots,(1,n{-}2,n{-}1)\}
\ee
which pick up $\text{PT}(1,2,3,\ldots,m,n,m{+}1,\ldots,n{-}1)$ as the
only compatible one.\footnote{The ordering $\sigma$ is compatible with a set of triples ${\cal T}$ if each triple $(i_a,j_a,k_a)\in {\cal T}$ is a cyclic subword of $\sigma$. The decomposition property of the leading singularity of $\cal T$ into its compatible Parke--Taylor factors follows non-trivially from decomposing $G(2,n)$ into positive regions \cite{Arkani-Hamed:2014bca}.} This is the right hand side of \eqref{eq:KKd}. Changing the orientation of every triple after $(1,m,n)$ (which at the linear level corresponds to introducing the relative minus sign in \eqref{eq:linearkk}) leads to the set 
\be
\{(123),(134),\ldots,(1,m{-}1,m),(1,m,n),(n,1,m{+}1),(m{+}1,1,m{+}2),\ldots,(n{-}2,1,n{-}1)\}
\ee
This set has as compatible orderings the Parke--Taylor factors $\text{PT}(1,\{\omega_i\},n)$ in the left hand side of \eqref{eq:KKd} (which at the linear level correspond to the sum in \eqref{eq:linearkk}).

Finally, let us close this section by giving a direct derivation of the KK relations
at the level of Parke--Taylor factors, as following directly from their parent identities. We can multiply both sides of the first equation in \eqref{eq:powerd} by $\hat{\boxempty}^{n-m-1}$ to give:
\be
\sum_{i=1}^{\binom{n-2}{m-1}}\hat{\boxempty}^{n-m-1}\boxempty_{\omega_{i}}^{m-1}=\boxempty{}^{m-1}\hat{\boxempty}^{n-m-1}.
\ee

Here we have used that $\hat{\boxempty}$ is composed of $n-m-1$
triangles and hence $\hat{\boxempty}^{n-m}=0$. Finally,
we can use \eqref{eq:ddecomp} together with $\boxempty_{\omega_{i}}^{m}=0$
to write
\begin{align}
\sum_{i=1}^{\binom{n-2}{m-1}}d_{\omega_{i}}^{n-2} & = (-1)^{n-m-1}\binom{n-2}{n-m-1}\sum_{i=1}^{\binom{n-2}{m-1}}\hat{\boxempty}^{n-m-1}\boxempty_{\omega_{i}}^{m-1}\nonumber\\
& = (-1)^{n-m-1}\binom{n-2}{n-m-1}\boxempty{}^{m-1}\hat{\boxempty}^{n-m-1}\nonumber\\
& = (-1)^{n-m-1}d^{n-2}\,,
\end{align}
which is the identity \eqref{eq:KKd} we wanted to prove.

\subsection{\label{sec:more-general-identities}A Preview of More General Identities Among Leading Singularities}

The construction leading to the $U(1)$ decoupling identity from the parent \eqref{eq:lineardecoupling} can be generalized so as to obtain new relations among leading singularities. In fact the same parent identity could have become the equivalence of two distinct sets of $n{-}1$ triples by simply moving one $\Delta_{i,i+1,n+1}$ from one side of the equation to the other. More explicitly, by moving $\Delta_{n,1,n+1}$ in  \eqref{eq:lineardecoupling} to the LHS and then using the triangulation \eqref{eq:lineardecoupling2}, we find
\be
  \Delta_{1,2,n}+\Delta_{2,3,n}+\cdots + \Delta_{n-2,n-1,n}+\Delta_{n-1,1,n}  - \Delta_{n,1,n+1} = \Delta_{1,2,n+1}+\Delta_{2,3,n+1}+\cdots +\Delta_{n-1,n,n+1}.
\ee
This is nothing but two different triangulations of an $(n{+}1)$-gon. Computing the $(n{-}1)^{\text{th}}$ power on both sides leads to the same standard Parke--Taylor function ${\rm PT}(1,2,\ldots, n,n{+}1)$ as expected.

Consider now the octahedral leading singularity and try to obtain a new relation by reversing the previous procedure. The first step is to use the equivalence of its two presentations in triples,
\be
\Delta_{123}+\Delta_{345}+\Delta_{561}+\Delta_{642} = \Delta_{234} + \Delta_{456} + \Delta_{612} + \Delta_{153}.
\ee  
Next, a simple identity can be obtained by moving $\Delta_{234}$ from the RHS to the LHS and computing the fourth power of both sides:
\be
(\Delta_{123}+\Delta_{345}+\Delta_{561}+\Delta_{642} - \Delta_{234})^4= (\Delta_{456} + \Delta_{612} + \Delta_{153})^4.
\ee
Note that the RHS only has three $\Delta$'s and therefore vanishes leading to
\begin{align}
\Delta_{123}\Delta_{345}\Delta_{561}\Delta_{642} + \Delta_{234}\Delta_{345}\Delta_{561}\Delta_{642}+ \Delta_{123}\Delta_{234}\Delta_{561}\Delta_{642} & \nonumber\\
+\Delta_{123}\Delta_{345}\Delta_{234}\Delta_{642}
+\Delta_{123}\Delta_{345}\Delta_{561}\Delta_{234} &= 0.
\end{align}
The resulting identity expresses the octahedral leading singularity, first term in the sum, as a sum of four Parke--Taylor factors, each of them with a soft factor attached.

These two examples can be further generalized as follows. 

Consider any given set of $n{-}2$ triples ${\cal T}_1$ and find a second set of $m\geq n-2$  triples ${\cal T}_2$ such that 
\be
\sum_{\tau \in {\cal T}_1} \alpha_\tau \Delta_{\tau_1,\tau_2,\tau_3} = \sum_{\tau \in {\cal T}_2} \beta_\tau \Delta_{\tau_1,\tau_2,\tau_3}\,.
\ee 
Note that when ${\cal T}_1$ does not allow any moves $m$ is necessarily larger than $n{-}2$.

By taking the $(n-2)^{\text{th}}$ power of this parent identity or some rearrangement of it (obtained by moving terms from the RHS to LHS) we obtain new identities. Notice that when the LHS is taken to the $(n-2)^{\text{th}}$ power, one of the terms will always correspond to the leading singularity of ${\cal T}_1$.

Let us illustrate these general identities with an example involving a singularity that does not admit any moves. Consider the set of triples depicted in Figure \ref{fig:hourglass}:
\be
{\cal T}_1= \{(1,3,2),(1,4,7),(7,6,3),(4,5,6),(2,7,5)\}\,,\label{eq:hourglass}
\ee

\begin{figure}[h!]
	\centering
	\includegraphics[scale=0.9]{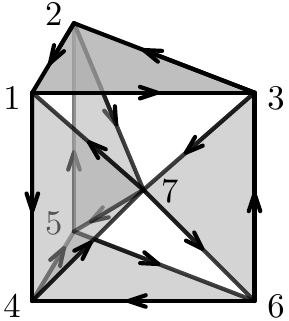}
	\caption{\label{fig:hourglass}The triangles and edges corresponding to \eqref{eq:hourglass}. They are not embeddable on the surface of a sphere.}
\end{figure}

Since this object is rigid it must be that we require $m>5$ to produce relations. It turns out that it is necessary to consider at least $m=7$ triples. The choice is not unique. For example, using 
\be
{\cal T}_2= \{(1,5,2),(1,4,5),(7,5,6),(7,6,3),(7,3,2),(4,7,6),(1,3,7)\}\,
\ee
it is easy to check that 
\be
\sum_{\tau \in {\cal T}_1}  \Delta_{\tau_1,\tau_2,\tau_3} =  \sum_{\tau \in {\cal T}_2}  \Delta_{\tau_1,\tau_2,\tau_3}\,.
\ee 
Taking the fifth power of LHS gives $F_{\mathcal{T}_1}$, whereas the fifth power of RHS leads to $\binom{9}{2}=36$ terms, each of them being a leading singularity. This $37$-term identity can be replaced by a $12$-term identity by simply moving $\Delta_{467}$ from the RHS to the LHS, i.e.
\be
\left(\sum_{\tau \in {\cal T}_1}  \Delta_{\tau_1,\tau_2,\tau_3} - \Delta_{4,6,7}\right) ^5 = \left( \sum_{\tau \in {\cal T}_2 \setminus \{(4,6,7)\}}  \Delta_{\tau_1,\tau_2,\tau_3}\right)^5\,.
\label{simsym}
\ee 
This time the fifth power only gives six terms on the LHS and five on the RHS (one of them is zero). It can be checked that each of the new terms simply corresponds to Parke-Taylor factors with soft factors attached. This gives a presentation of ${\cal T}_1$ in terms of $10$ simple leading singularities. Note that ${\cal T}_1$ is the unique, up to relabeling, irreducible leading singularity at seven points \cite{He:2018pue}.   

Let us finally point out that the (unoriented) set of triples in $\mathcal{T}_1$ contains a $G \simeq S_3\rtimes\mathbb{Z}_2$ symmetry. The $S_3$ piece acts by permuting the labels $\{1,2,3\}$ (and $\{4,5,6\}$ in the corresponding way), whereas $\mathbb{Z}_2$ flips top and bottom of Figure~\ref{fig:hourglass}.  The symmetry group is realized explicitly as a subgroup of $S_6$ in terms of generators as for example
\be
G = \langle (12)(46),(23)(56),(14)(25)(36)\rangle.
\ee
Even though $\mathcal{T}_1$ is invariant under these transformations, $\mathcal{T}_2$ is not. Therefore one finds different representations of the identity \eqref{simsym} given by the action of the group on the labels of $\mathcal{T}_2$. If we were to draw on-shell diagrams for the different sets of triples associated with $\mathcal{T}_2$ and its images we would then find diagrams related via moves. Of course, these on-shell diagrams do not have the correct dimension to be leading singularities. We leave the study of such moves for future work. 

\section{\label{sec:higher-k}N$^{k-2}$MHV Leading Singularities and the $\Delta$-Algebra}

In this section we turn to the generalization of the $\Delta$-algebra to helicity sectors other than MHV. A general on-shell diagram describing a leading singularity in the N$^{k-2}$MHV sector must satisfy the following condition $\alpha+\beta = 2(k{-}2)$, where $\alpha$ is the number of white vertices that are not connected to any external label either directly or via a path only involving white vertices, while $\beta$ is the number of edges that connect two black vertices \cite{Arkani-Hamed:2014bca}. It is now clear why the $k=2$ sector is special. All MHV diagrams must have $\alpha=\beta=0$ and this leads to the description in terms of lists of $n{-}2$ triples of labels. Already the $k=3$ sector allows for several possibilities, since $\alpha+\beta=2$.

One attempt to simplify the general problem is to restrict to diagrams with $\alpha=0$. In this case one can contract all clusters of trivalent black vertices into higher-valent black vertices and associate with them a tuple of labels. For example, if $k=3$ then $\beta=2$ and there are two possibilities. The first is to have two $4$-valent black vertices and $n{-}5$ trivalent black vertices. The second is to have one $5$-valent black vertex and $n{-}4$ trivalent black vertices.

Having sets of labels of different length seems to prevent the development of a $\Delta$-algebra formulation. Luckily, Franco, Galloni, Penante, and Wen \cite{Franco:2015rma} showed that at the level of the integrand in the Grassmannian formulation of scattering amplitudes \cite{ArkaniHamed:2012nw} it is possible to work with only $(k{+}1)$-tuples of labels and lists containing exactly $n{-}k$ of them. The price to pay is to abandon a direct connection to on-shell diagrams. The reason is that one has to work in the top cell of $G(k,n)$ which has dimension $k(n{-}k)$, while leading singularity on-shell diagrams live in $2(n{-}2)$-dimensional cells. However, after computing with $(k{+}1)$-tuples, one can go to boundaries of $G(k,n)$ of dimension $2(n{-}2)$ and make connection to leading singularities. We will not discuss this last step and instead concentrate on the $\Delta$-algebra related to the integrand.

The two-components objects $\lambda_a$ are replaced by $k$-component ones $Z_a \in \mathbb{P}^{k-1}$ and Pl\"ucker coordinates $\langle a,b\rangle $ are replaced by $\la 1,2, \ldots, k \ra := \det(Z_1, Z_2, \ldots, Z_{k})$.  
Given a list of $(k{+}1)$-tuples
\be
{\cal T}=\{ (\tau^{(1)}_1,\tau^{(1)}_2,\ldots , \tau^{(1)}_{k+1}), \ldots ,(\tau^{(n-k)}_1,\tau^{(n-k)}_2,\ldots , \tau^{(n-k)}_{k+1})\}
\ee
the determinant formula \eqref{rtnl} can be generalized to \cite{Franco:2015rma}:
\be
F_{\cal T} := \frac{\left({\rm det}'M\right)^k}{\prod_{i=1}^{n-k}|\tau^{(i)}_1,\tau^{(i)}_2,\ldots , \tau^{(i)}_{k+1}|},
\ee
where 
\be
|a_1,a_2,\ldots ,a_{k+1}| = \la a_1,a_2, \ldots, a_{k} \ra\la a_2,a_3, \ldots, a_{k+1} \ra\cdots \la a_{k+1},a_1, \ldots, a_{k-1} \ra. 
\ee
Also the matrix $M$ is the natural generalization of the $k=2$ case. It is a $(n{-}k)\times n$ matrix with columns labeled by the particle number and rows by $(k{+}1)$-tuples in the list ${\cal T}$. The only non-zero entries in the row given by the $i^{\rm th}$ tuple are those labeled by elements in this tuple. The non-zero entries have values equal to the Pl\"ucker coordinate with labels other than the column in consideration. Finally, the reduced determinant is, up to a sign,
\be
{\rm det}'M := \frac{{\rm det}M^{(a_1,a_2,\ldots ,a_{k})}}{\la a_1,a_2, \ldots, a_{k} \ra}
\ee
and is independent of the choice of labels $a_i$ \cite{Franco:2015rma}.

Following the same steps that led to the definition of $\Delta$ for $k=2$, i.e., rewriting each of the $k$ powers of ${\rm det}'M$ in terms of $k$ kinds of Grassmann variables $\theta^{(r)}_i$, leads to
\be
\Delta_{1,2,\ldots ,k+1}:=\frac{\prod_{r=1}^k\left(\theta_1^{(r)}\la 2,3,\ldots ,k{+}1\ra - \theta_2^{(r)}\la 1,3,\ldots ,k{+}1\ra + \ldots + (-1)^{k} \theta_{k+1}^{(r)}\la 1,2,\ldots ,k\ra\right)}{|1,2,\ldots ,k{+}1|}.
\ee
In the numerator, terms accompanying $\theta_{i}^{(r)}$ come with a sign $(-1)^{i+1}$. The whole object picks up a minus sign under exchange of two adjacent labels, $\Delta_{\ldots ij \ldots} = -\Delta_{\ldots ji\ldots}$, as well as a sign $(-1)^k$ under cyclic shifts of labels.

Let us denote the product of $\Delta$'s in the set ${\cal T}$ by
\be
{\rm LS}_{\cal T}:= \prod_{i=1}^{n-k}\Delta_{\tau^{(i)}_1,\tau^{(i)}_2,\ldots ,\tau^{(i)}_{k+1}},\label{LS-higher-k}
\ee
so that
\be
F_{\cal T} = \int \prod_{r=1}^{k} \prod_{i=1}^{n} d\theta^{(r)}_i \,\text{LS}_{\cal T} \prod_{r=1}^{k} \frac{\theta^{(r)}_{a_{1}}\theta^{(r)}_{a_{2}}\cdots \theta^{(r)}_{a_{k}} }{\langle a_{1}, a_{2}, \ldots, a_{k} \rangle}
\ee
for an arbitrary choice of labels $a_{i}$. Of course, this is an abuse of notation, since in order to obtain a physical leading singularity associated with an on-shell diagram one has to not only perform the integration over $\theta_i^{(r)}$'s, but also reduce the dimension of the cell in $G(k,n)$ down to $2(n{-}2)$.

Notice that $\Delta$'s defined above have Grassmann degree $k$. This means that on top of squaring to zero property,
\be
\Delta_{a_1,a_2,\ldots,a_{k+1}}^2 = 0,
\ee
we also have the (anti)commutation relations:
\begin{align}
\left[\Delta_{a_1,a_2,\ldots, a_{k+1}} , \Delta_{b_1,b_2,\ldots, b_{k+1}}\right] = 0 &\qquad\text{for } k \text{ even},\nonumber\\
\{\Delta_{a_1,a_2,\ldots, a_{k+1}} , \Delta_{b_1,b_2,\ldots, b_{k+1}}\} = 0 &\qquad\text{for } k \text{ odd}.
\end{align}
Hence, following the steps in the MHV case, in the $k$ even case we can rewrite \eqref{LS-higher-k} as
\be
{\rm LS}_{\cal T} = \frac{1}{(n-k)!}\left(\sum_{i=1}^{n-k}\Delta_{\tau^{(i)}_1,\tau^{(i)}_2,\ldots ,\tau^{(i)}_{k+1}}\right)^{n-k}.
\ee
Due to anticommutation of $\Delta$'s, there is no natural analogue of this expression for $k$ odd, though one might still construct $\text{LS}_{\cal T}$ in interesting ways out of products of linear combinations of $\Delta$'s. Moreover, one can verify that $\Delta_{a_1,a_2,\ldots ,a_{k+1}}$ satisfy  
\be
\Delta_{1,2,\ldots,k+1} = \sum_{i=1}^{k+1} \Delta_{1,2,\ldots, i{-}1, R, i{+}1, \ldots, k{+}1}\label{higher-k-delta}
\ee
for an arbitrary $Z_R$. This relation is in fact the so-called $1{-}(k{+}1)$ bistellar flip, or Pachner move \cite{PACHNER1991129}. This leads to the generalization of $u_{ij}$ and to a formula in terms of ``facet variables'' $u_{a_1,a_2,\ldots,a_k}$ such that 
\be
\Delta_{1,2,\ldots ,k+1} = u_{1,2,\ldots,k}+ (-1)^k u_{2,3,\ldots ,k+1}+ u_{3,4,\ldots,k+1,1} + \ldots + (-1)^k u_{k+1,1,\ldots ,k-1}.\label{higher-k-u}
\ee

The identity \eqref{higher-k-delta} involves $k{+}2$ different labels. Since computing leading singularities, according to \eqref{LS-higher-k}, requires the $(n{-}k)^{\text{th}}$ power of $\Delta$'s, for $n = k{+}2$ we are always interested in quadratic identities. The most natural one is obtained by multiplying both sides of \eqref{higher-k-delta} by $\Delta_{1,2,\ldots,k+1}$ from the left, giving:
\be
\Delta_{1,2,\ldots,k+1} \sum_{i=1}^{k+1} \Delta_{1,2,\ldots, i{-}1, k+2, i{+}1, \ldots, k{+}1} = 0.
\ee
Here we set $R=k{+}2$ for simplicity. This is the analogue of the $U(1)$ decoupling identity at $k{=}2$.

Other types of identities can be obtained by moving $\Delta$'s from RHS to LHS of \eqref{higher-k-delta} and squaring both sides (for even $k$), or alternatively multiplying the whole equation by one of the sides. It turns out that $k{=}2$ is the only case for which this operation relates a single leading singularity to another one. This is because only for $k{=}2$ the $\Delta$'s can be partitioned into two on the LHS and two on the RHS, which after squaring gives an equality between two leading singularities: the square move.

For instance, \eqref{higher-k-delta} for $k=4$ implies that the $\Delta$'s satisfy the identity (the $3{-}3$ bistellar flip):
\be\label{rinv}
\Delta_{12345} + \Delta_{26345} + \Delta_{61345} = \Delta_{12645} + \Delta_{12365} + \Delta_{12346},
\ee
where we also used antisymmetry under exchanging labels, $\Delta_{abcde} = -\Delta_{bacde}$.
This identity is very familiar in the context of NMHV amplitudes in momentum twistor space. In fact, identifying our variables $Z_a$ with momentum twistors and $\theta^{(r)}_a$ with the $Z_a$ superpartner $\eta^I_a$, one finds that $\Delta_{abcde}$ is exactly what is known as the $R$-invariant $R_{abcde}$. Moreover, \eqref{rinv} is nothing but the equivalence of the six-point NMHV amplitude when computed using the BCFW technique or its parity conjugated version.    

The straightforward analogy with $R$-invariants somehow breaks down when we square both sides of  \eqref{rinv} in order to get a relation among $k=4$ leading singularities.\footnote{Recall that by this we mean the integrand of the Grassmannian integral formulation of the object.} The identity is then
\be
\Delta_{12345}\Delta_{26345} + \Delta_{12345}\Delta_{61345} + \Delta_{26345}\Delta_{61345} = \Delta_{12645}\Delta_{12365} + \Delta_{12645}\Delta_{12346} + \Delta_{12365}\Delta_{12346}
\ee
and it is the analogue of the square move identity $\Delta_{123}\Delta_{134} = \Delta_{234} \Delta_{241}$ at $k=2$. Similarly, squaring both sides of \eqref{higher-k-u} reveals that $u$'s satisfy
\be
u_{1234} ( u_{2345} + u_{3451}) + \text{cycl.}=0,
\ee
which is the analogue of the identity $u_{12}u_{23} + \text{cycl.} = 0$ at $k=2$.

It is clear that there is much more to be explored in the higher-$k$ extension of the $\Delta$-algebra. A particularly interesting direction is the recent use of volumes of $k$-simplices in the computation of $\phi^3$ biadjoint scalar amplitudes in the context of their corresponding amplituhedron, i.e., the associahedron in the Mandelstam space \cite{Arkani-Hamed:2017mur,He:2018svj}. We leave the exploration of consequences of the above algebras for future work.

\section{\label{sec:discussion}Discussions and Future Directions}

In this work we introduced the $\Delta$-algebra as a natural structure in the construction of MHV non-planar leading singularities. Physical applications of the algebra structure range from manifesting symmetries, such as cyclic invariance in the Parke--Taylor case, to revealing new equivalences of on-shell diagrams not connected via square moves.  Also, identities involving several leading singularities found a natural description in terms of parent (or linear) relations in the algebra. It is clear that we only scratched the surface of the structure. In particular, extensions of identities to $k>2$ and applications to loop integrands, as well as less supersymmetric \cite{Benincasa:2015zna} and gravity on-shell diagrams \cite{Heslop:2016plj,Herrmann:2016qea} are some of the most pressing directions. Another direction which is hard to overlook is to find possible connections to the triangulation of the $m=2$ amplituhedron for general $k$ as discussed in \cite{Arkani-Hamed:2014dca,Arkani-Hamed:2017tmz} where a structure of the form ``$({\rm polygon})^k$" was found. 

In addition, there are some directions for future developments for which we can provide more detailed explanations.

\subsection{Comparison with (Combinatorial) BCFW Relations}

It is well-known that it is possible to obtain relations among leading singularities by attaching a BCFW-bridge to one of them and using the residue theorem to get the others \cite{Britto:2004ap,Britto:2005fq}. A BCFW-bridge is a 4-particle diagram with a single black vertex connected to a single white vertex (see Chapter~17 of \cite{ArkaniHamed:2012nw} for a review). It is not a leading singularity but it is useful when two of the four legs are connected to a complicated leading singularity. In fact, it can be shown that this procedure decomposes any MHV leading singularity into a sum of leading singularities with at least one soft factor attached. Here we compare the identities we obtained by moving $\Delta$'s in Section~\ref{sec:more-general-identities} with those from a purely combinatorial version of the BCFW procedure when applied to a list of triples ${\cal T}$.

Let us start by explaining the {\it combinatorial} BCFW identity related to a BCFW-bridge $(1,n)$.

Given a list of triples ${\cal T}$, identify all those where label $n$ enters and make a list ${\cal T}_n$. For each triple in ${\cal T}_n$, say $(nab)$, create the following two lists:
\be\label{exchange}
{\cal T}_{na} := \{(1na)\}\cup \Big({\cal T}-\{(nab)\}\Big)\Big|_{n\to a}
\ee
and
\be
{\cal T}_{nb} := \{(1bn)\}\cup \Big({\cal T}-\{(nab)\}\Big)\Big|_{n\to b}.
\ee
Note that the orientation of the triples $(1na)$ and $(1bn)$ was chosen to match that of the shared edges with the original triple $(nab)$. Also, when either $a{=}1$ or $b{=}1$, the corresponding contribution is defined to vanish, i.e., ${\cal T}_{n1} := 0$. The new lists either correspond to new leading singularities or to zero.

The combinatorial BCFW identity is then
\be
{\rm LS}_{\cal T} = \sum_{(nab)\in {\cal T}_n} \Big( {\rm LS}_{{\cal T}_{na}}+{\rm LS}_{{\cal T}_{nb}} \Big).\label{BCFW-sum}
\ee

The simplest example is the four-particle $U(1)$ decoupling identity. In our language we start with
\be
\Delta_{123} = \Delta_{124}+ \Delta_{234} + \Delta_{314}.
\ee
Multiplying both sides by $\Delta_{123}$ we get
\be\label{ude}
\Delta_{123}\Delta_{124} + \Delta_{123}\Delta_{234} + \Delta_{123}\Delta_{314} = 0.
\ee
In the BCFW framework, we can attached a BCFW-bridge $(2,4)$ to ${\rm PT}(1234)$. Following the combinatorial procedure lead us to \eqref{ude}.

This coincidence of results turns out to be an accidental coincidence. One can check that there is not any single BCFW-bridge that can be attached to the leading singularity $\{(123),(345),(561),(642)\}$ that gives rise to the identity found in Section~\ref{sec:more-general-identities} by moving a triangle, i.e.,
\begin{align}
\Delta_{123}\Delta_{345}\Delta_{561}\Delta_{642} = \Delta_{234}\Delta_{345}\Delta_{561}\Delta_{642}+ \Delta_{123}\Delta_{234}\Delta_{561}\Delta_{642} & \nonumber\\
+\Delta_{123}\Delta_{345}\Delta_{234}\Delta_{642}
+\Delta_{123}\Delta_{345}\Delta_{561}\Delta_{234} &.
\end{align}
For example, applying a BCFW-bridge $(3,6)$ to $\{(123),(345),(561),(642)\}$ one finds that
\begin{align}
\Delta_{123}\Delta_{345}\Delta_{561}\Delta_{642} = \Delta_{123}\Delta_{142}\Delta_{345}\Delta_{361} +
\Delta_{123}\Delta_{345}\Delta_{356}\Delta_{542} &\nonumber\\
+\Delta_{123}\Delta_{345}\Delta_{364}\Delta_{541} +
\Delta_{123}\Delta_{326}\Delta_{345}\Delta_{521} &.
\end{align}
Here we used that ${\cal T}_{6} = \{(561),(642)\}$ and hence \eqref{BCFW-sum} involves a sum over $(a,b) = (1,5), (4,2)$ with the corresponding lists
\be
{\cal T}_{61} = \{ (123), (142), (345), (361)\},\qquad
{\cal T}_{65} = \{ (123), (345), (356), (542) \},\\
\ee
as well as
\be
{\cal T}_{64} = \{ (123), (345), (364), (541) \},\qquad
{\cal T}_{62} = \{ (123), (326), (345), (521) \}.
\ee

It is clearly a very important question to find out how to connect these two constructions. Moreover, it is hard to overlook the fact that the combinatorial BCFW relation involves a procedure very reminiscent of the circuit elimination axiom of a matroid \cite{reiner2005lectures}.

\subsection{Gauge Redundancies in $\Delta_{abc}$ and Differential Form Interpretation}

The realization of the $\Delta$-algebra discussed in this paper is based on that for the building block 
\be
u_{ab} = \frac{(\theta_a-\theta_b)(\chi_a-\chi_b)}{x_{a}-x_{b}}.
\ee
However, this representation is not unique, there are redundancies similar to gauge transformations, for example,
\be
u_{ab} \to u_{ab} + \alpha \theta_a\theta_b
\ee
for any constant $\alpha$ leaves all properties of the algebra invariant. Under this transformation 
\be
\Delta_{abc}\to \Delta_{abc} + \alpha \left( \theta_a\theta_b+\theta_b\theta_c+\theta_c\theta_a\right).
\ee
Now, if both $u_{ab}$ and $\Delta_{abc}$ are not gauge invariant, what is the physical observable? The answer is, of course, the rational function associated to a leading singularity. More precisely,
\be
F_{\cal T} = \int \prod_{a=1}^n d\theta_a d\chi_a \, \prod_{i=1}^{n-2}\Delta_{a_ib_ic_i} \times \frac{\theta_d\theta_e}{x_{de}}\frac{\chi_f\chi_g}{x_{fg}}.
\ee
The fact that there are gauge redundancies suggests that an interpretation as differential forms could be the link to an even more geometrical picture. In fact, as mentioned in the introduction, the object $\Delta_{abc}$ defined in homogeneous variables becomes the $2$-form $\Omega_{abc}$ under the Grassmann variable $\leftrightarrow$ ($1$-form) identification used by He and Zhang \cite{He:2018okq}. It is interesting that when $\Omega_{abc}$ is written in inhomogeneous variables the form that appears is a gauge transform of $\Delta_{abc}(x)$. Let us show this more explicitly. 

Let us parametrize holomorphic spinors in the following way:
\be
\lambda_{i} = e^{\tilde{x}_i} \begin{pmatrix} 1 \\ x_i \end{pmatrix}, \qquad \text{so that} \qquad \langle ij \rangle = \epsilon_{\alpha\beta} \lambda_i^\alpha \lambda_j^\beta = -e^{\tilde{x}_i + \tilde{x}_j} x_{ij}.
\ee
Let us apply this to $\Omega_{ijk}$, which we rewrite below for the reader's convenience,
\be
\Omega_{ijk} = d\log \frac{\la ij \ra }{\la ki \ra} \wedge d\log \frac{\la jk \ra }{\la ki \ra} = \frac{\left( d\lambda^{1}_i \la jk \ra + \text{cycl.} \right)\wedge\left( d\lambda^{2}_i \la jk \ra + \text{cycl.} \right)}{\la ij \ra \la jk \ra \la ki \ra}.
\ee
Using the above change of variables we find that it decomposes as:
\be
\Omega_{ijk} = U_{ij} + U_{jk} + U_{ki}, \qquad \text{with}\qquad U_{ij} = d\tilde{x}_i \wedge d\tilde{x}_j - \frac{dx_{ij} \wedge d\tilde{x}_{ij}}{x_{ij}}.
\ee
Under the identification $(\theta_a,\chi_a)\to (dx_a,d\tilde x_a)$ we find a gauge transform version, with $\alpha=1$, of the representation we used. Now we are guaranteed that the 2-forms defined above also furnish a representation of $\text{H}^\ast\text{Conf}_{n}(\mathbb{R}^3)$, i.e.,
\be
U_{ij} = - U_{ji}, \qquad U_{ij} \wedge U_{ij} = 0, \qquad U_{ij} \wedge U_{jk} + \text{cycl.} = 0.
\ee
Note that if we were to identify $(x_a,\tilde x_a)\to (z_a,{\bar z}_a)$, then $U_{ij}$ would be a combination of a $(1,1)$-form and a $(0,2)$-form. However, the $(0,2)$ part is pure gauge. It would be interesting to explore this further, especially in view of the following discussion.

\subsection{${\rm det}'\Phi(x,y)$ and $u_{ij}$}

There is an object that made its appearance for the first time in a twistor formulation of gravity amplitudes introduced by one of the authors and Geyer in \cite{Cachazo:2012da} and which admits an interesting formulation in terms of $u_{ij}$. This is the reduced determinant of
\be
\Phi_{ab}(x) = \left\{
              \begin{array}{ll}
                \frac{s_{ab}}{x_{ab}^2}, & \hbox{if $a\neq b$;} \\
                -\sum_{c=1}^n\frac{s_{ac}}{x_{ac}^2}, & \hbox{if $a=b$.}
              \end{array}
            \right.
\ee
It is easy to show that if $x$'s are any solution to the scattering equations then
\be
\sum_{a=1}^n x_a^m\Phi_{ab}(x) = 0 \quad {\rm for} \quad m=\{0,1,2\}.
\ee
This implies that $\Phi$ is a $n\times n$ symmetric matrix of corank $3$.

The object of interest for this part is a generalization of $\Phi$ introduced in \cite{Cachazo:2013gna} to prove a property known as KLT orthogonality. The object is defined using the matrix
\be
\Phi_{ab}(x,y) = \left\{
              \begin{array}{ll}
                \frac{s_{ab}}{x_{ab}y_{ab}}, & \hbox{if $a\neq b$;} \\
                -\sum_{c=1}^n\frac{s_{ac}}{x_{ac}y_{ab}}, & \hbox{if $a=b$.}
              \end{array}
            \right.
\ee
Of course, $\Phi_{ab}(x,x)$ reduces to the previous definition. It turns out that if $x$ and $y$ are distinct solutions to the scattering equations then $\Phi_{ab}(x,y)$ has corank $4$ since
\be
\sum_{a=1}^n x_a^{m_1}y_b^{m_2}\Phi_{ab}(x) = 0 \quad {\rm for} \quad (m_1,m_2)\in \{(0,0),(1,0),(0,1),(1,1)\}.
\ee
In order to obtain a non-zero determinant it is necessary to remove four rows and four columns. However, for our purposes it will be sufficient to consider the submatrix of $\Phi(x,y)$ obtained by deleting rows $i,j,k$ and columns $p,q,r$. Clearly, the determinant vanishes unless $x_a$ and $y_a$ are the same solution to the scattering equations. Let us denote by $I_x$ and $I_y$ the label in $\{1,2,\ldots , (n{-}3)!\}$ corresponding to the solutions $x$ and $y$ belong to. Therefore one finds that the reduced determinant is
\be\label{ortho}
{\rm det}' \Phi(x,y) = \frac{{\rm det}\Phi_{ijk}^{pqr}(x,y)}{x_{ij}x_{jk}x_{ki}y_{pq}y_{qr}y_{rp}} = \delta_{I_x,I_y}{\rm det}' \Phi(x,x).
\ee

The Grassmann integral formulation is
\be\label{grassD}
{\rm det}' \Phi(x,y) =\int \prod_{m=1}^n(d\theta_m\,d\chi_m) \exp \left(\sum_{a,b}\theta_a\Phi_{ab}(x,y)\chi_b\right)\frac{\theta_i\theta_j\theta_j}{x_{ij}x_{jk}x_{ki}}
\frac{\chi_p\chi_q\chi_r}{y_{pq}y_{qr}y_{rp}}.
\ee
Once again it is possible to rewrite this in term of $u_{ij}$ as follows
\be\label{redDet}
{\rm det}' \Phi(x,y) =\int \prod_{m=1}^n(d\theta_m\,d\chi_m)\frac{\theta_i\theta_j\theta_j}{x_{ij}x_{jk}x_{ki}}
\frac{\chi_p\chi_q\chi_r}{y_{pq}y_{qr}y_{rp}}\times \left(\sum_{\substack{a,b=1\\ b\neq a}}^n\frac{s_{ab}}{y_{ab}}u_{ab}\right)^{n-3}
\ee
up to an irrelevant numerical factor.

Following the spirit of this work in the way leading singularities were associated with the product of $\Delta$'s it is natural to propose that ${\rm det}' \Phi(x,y) $ must likewise be associated with
\be
\left(\sum_{\substack{a,b=1\\ b\neq a}}^n\frac{s_{ab}}{y_{ab}}u_{ab}\right)^{n-3}.
\ee
Moreover, as proven by the second author in \cite{Early:2018mac}, any linear combination of $u_{ab}$'s with antisymmetric coefficients $c_{ab}$ that satisfy $\sum_a c_{ab}=0$ can be written in terms of $\Delta_{abc}$. Identifying $c_{ab}$ with $s_{ab}/y_{ab}$ and recalling that the $y$'s satisfy the scattering equations one finds
\be
\sum_{\substack{a,b=1\\ b\neq a}}^n\frac{s_{ab}}{y_{ab}}u_{ab} = \sum_{\tau\in {\cal T}}\alpha_\tau(s_{ab},y_{ab})\, \Delta_{\tau_1,\tau_2,\tau_3}
\ee
for some set of triples ${\cal T}$ and coefficients $\alpha_{\tau}$. It would be interesting to explore this connection further.

\section*{Acknowledgements}
We thank K.~Yeats and S.~Yusim for useful discussions. N.E. is grateful to A.~Ocneanu for many illuminating discussions during his graduate study about permutohedral plates and blades and related topics, and also to A.~Postnikov for discussions during his course on cluster algebras in Fall, 2018 at MIT. N.E. also would like to thank Perimeter Institute for their support and hospitality while this work was initiated. A.G. thanks CONICYT for financial support. This research was supported in part by Perimeter Institute for Theoretical Physics. Research at Perimeter Institute is supported by the Government of Canada through the Department of Innovation, Science and Economic Development Canada and by the Province of Ontario through the Ministry of Research, Innovation and Science.

\renewcommand{\thefigure}{\thesection.\arabic{figure}}
\renewcommand{\thetable}{\thesection.\arabic{table}}
\appendix

\section{\label{app:gluing}Reducible Leading Singularities}

Recall from \eqref{LS-map} that an $n$-point MHV leading singularity has the associated rational function $F_{\cal T}$ computed as follows:
\be\label{rational-fn}
F_{\cal T} = \int \prod_{i=1}^{n} d\theta_i d\chi_i \; \text{LS}_{\cal T}\; \frac{\theta_a \theta_b}{x_{ab}} \frac{\chi_c \chi_d}{x_{cd}},
\ee
where the labels $a,b,c,d$ can be chosen arbitrarily.

We can ask what happens if ${\cal T}$ contains a subset of triples $\widetilde{\cal T} \subset {\cal T}$, which itself forms a leading singularity, i.e.,
\be\label{LS-decomp}
\text{LS}_{\cal T} = \text{LS}_{\widetilde{\cal T}} \prod_{\tau \in {\cal T} \setminus \widetilde{\cal T}} \Delta_{\tau_1, \tau_2, \tau_3}.
\ee
Note that here the complementary set ${\cal T} \setminus \widetilde{\cal T}$ does not necessarily give a leading singularity on its own. Let us say, without loss of generality, that $\widetilde{\cal T}$ contains $m{-}2$ triples constructed out of the labels $\{1,2,\ldots,m\}$ for $m < n$. Plugging the decomposition \eqref{LS-decomp} back into the expression for the rational function \eqref{rational-fn}, we find:
\be\label{LS-prime-factor}
F_{\cal T} = \underbrace{\left( \int \prod_{i=1}^{m} d\theta_i d\chi_i \; \text{LS}_{\widetilde{\cal T}}\; \frac{\theta_a \theta_b}{x_{ab}} \frac{\chi_c \chi_d}{x_{cd}} \right)}_{F_{\widetilde{\cal T}}} \left( \int \!\!\!\prod_{j=m+1}^{n} d\theta_j d\chi_j \prod_{\tau \in {\cal T} \setminus \widetilde{\cal T}} \Delta_{\tau_1, \tau_2, \tau_3} \,\bigg|_{\substack{\theta_i = \chi_i = 0\\ i=1,2,\ldots,m}} \right),
\ee
since Grassmann integrals associated to the labels $\{1,2,\ldots,m\}$ are saturated in the left factor. We also chose the arbitrary labels $a,b,c,d$ to be taken out of the same set. The rational function factors and the left factor is precisely $F_{\widetilde{\cal T}}$. In the right factor we set the Grassmann variables from the set $\{1,2,\ldots,m\}$ to zero. Since this statement holds on the level of the rational functions, it is clearly independent of a given representation in terms of sets of triples. This extends the notion of reducibility of \cite{He:2018pue} and motivates the following definition.

\begin{definition}
A set ${\cal T}$ of $n{-}2$ triples is called \emph{reducible} if there exists a proper non-empty subset $\widetilde {\mathcal T} \subsetneq {\mathcal T}$ of $m{-}2$ triples consisting of exactly $m>3$ labels.
\end{definition}

Thus, as above, the rational function $F_\mathcal{T}$ contains $F_{\widetilde{\mathcal{T}}}$ as a factor. Here we excluded the trivial case where $m=3$. Note that by the same logic, if there exists a subset of $m{-}2$ triples $\widetilde{\cal T} \subset {\cal T}$, which contains \emph{fewer} than $m$ labels, then $\text{LS}_{\widetilde{\cal T}} = 0$ and hence also $\text{LS}_{\cal T}$ vanishes. For example, any leading singularity containing two repeated triples $\widetilde{\cal T}=\{ (a,b,c), (a,b,c)\}$ vanishes. 

In the following discussion we give examples of how to construct reducible leading singularities by gluing other ones.

\subsection{Soft Factor}

First, let us consider the simplest case in which $m=n{-}1$, i.e., removing one triple from a leading singularity leaves us with another leading singularity. The complementary set is necessarily of the form
\be
{\cal T} \setminus \widetilde{\cal T} = \{(y,z,n)\}, \qquad \text{where} \qquad y,z \in \{ 1,2,\ldots, n{-}1\}.
\ee
Hence the right factor in \eqref{LS-prime-factor} becomes:
\be
\int d\theta_n d\chi_n\, \Delta_{yzn} \bigg|_{\substack{\theta_y = \chi_y=0\\ \theta_z = \chi_z = 0}} = \int  d\theta_n d\chi_n \left( \frac{\theta_n \chi_n}{x_{zn}} + \frac{\theta_n \chi_n}{x_{ny}} \right) = \frac{x_{zy}}{x_{zn} x_{ny}}.
\ee
This is nothing but the so-called \emph{soft factor} \cite{ArkaniHamed:2009dn} glued onto the leading singularity $\text{LS}_{\cal T}$ by the legs $y$ and $z$.

\subsection{Gluing by an Edge}

The above soft factor can be alternatively understood as gluing of a single triple, say, $\{(n{+}1,n{+}2,n)\}$, to the original set of triple $\cal T$ by identifying $n{+}1$ with $y$ and $n{+}2$ with $z$. Let us use the notation:
\be
{\cal T} \overset{(y,z)}{\underset{(n+1,n+2)}{\circ}} \{ (n{+}1,n{+}2,n)\}
\ee
to define the result of this operation.

It is in fact the simplest example of a more general gluing operation, which takes two sets of triples ${\cal T}_1$ and ${\cal T}_2$ with $n_1{-}2$ and $n_2{-}2$ elements respectively, and identifies a pair of labels $(y,z)$ from the first set with a pair of labels $(v,w)$ from the second. The result is denoted by:
\be\label{gluing-by-edge}
{\cal T}_3 = {\cal T}_1 \overset{(y,z)}{\underset{(v,w)}{\circ}} {\cal T}_2.
\ee
It consists of $n_1 {+} n_2 {-} 4$ triples and involves exactly $n_1 {+} n_2 {-} 2$ labels. Therefore it is a valid leading singularity.

For the above gluing operation, the right factor in \eqref{LS-prime-factor} simplifies as follows:
\be
\int \prod_{\substack{j=n_1+1\\ j \neq v,w}}^{n_1 + n_2} d\theta_{j} d\chi_{j} \prod_{\tau \in {\cal T}_2} \Delta_{\tau_1,\tau_2,\tau_3} \bigg|_{\substack{v=y\\ w=z}} = x_{yz}^2\, F_{{\cal T}_2} \bigg|_{\substack{v=y\\ w=z}},
\label{gluint}
\ee
where we assumed that prior to gluing ${\cal T}_2$ consists of labels $\{n_1{+}1, n_1{+}2, \ldots, n_1 {+} n_2\}$. The equality was obtained by multiplying by
\be
1= x_{yz}^2 \int  d\theta_{y} d\theta_{z}  d\chi_{y} d\chi_{z}\, \frac{  \theta_{y} \theta_{z}  \chi_{y} \chi_{z} }{x_{yz}^2}.
\ee
Hence the resulting leading singularity ${\cal T}_3$ from \eqref{gluing-by-edge} has a rational function factorizing as a product of the two rational functions of ${\cal T}_1$ and ${\cal T}_2$:
\be
F_{{\cal T}_3} = x_{yz}^2\, F_{{\cal T}_1} F_{{\cal T}_2} \bigg|_{\substack{v=y\\ w=z}}.
\ee

Let us consider an example in which we can take two copies of the octahedral set of triples:
\begin{align}
{\cal T}_1 &= \{(1,2,3),(3,4,5),(5,6,1),(6,4,2)\},\nonumber\\
{\cal T}_2 &= \{(7,8,9),(9,10,11),(11,12,7),(8,12,10)\}\label{app-example-triples}
\end{align}
and attach them by legs respectively $5\leftrightarrow 12$ and $6 \leftrightarrow 11$. The resulting leading singularity is determined by the data:
\be\label{two-octahedra}
{\cal T}_1 \overset{(5,6)}{\underset{(12,11)}{\circ}} {\cal T}_2 = \{(1, 2, 3), (3, 4, 5), (5, 6, 1), (6, 4, 2), (7, 8, 9), (9, 10, 6), (6, 5, 7), (8, 5, 10)\}.
\ee
The corresponding rational function is given in \eqref{two-octahedra-rtnl}.

\subsection{Gluing by a Triangle}

A natural generalization of the above procedure is to glue two sets of triples by a triangle. Let us identify the triangle $(y,z,t)$ from ${\cal T}_1$ with a triangle $(u,v,s)$ from ${\cal T}_2$ and denote the result with
\be
{\cal T}_3 = {\cal T}_1 \overset{(y,z,t)}{\underset{(v,w,s)}{\circ}} {\cal T}_2.
\ee
In this case, by using the identity
\be
1= x_{yz} x_{zt} x_{ty} \int  d\theta_{y} d\theta_{z} d\theta_t  d\chi_{y} d\chi_{z} d\chi_t\, \Delta_{yzt} \frac{  \theta_{y} \theta_{z}  \chi_{y} \chi_{z} }{x_{yz}^2}.
\ee
it is straightforward to show that the right factor in \eqref{LS-prime-factor} simplifies and the resulting rational function becomes:
\be
F_{{\cal T}_3} = x_{yz} x_{zt} x_{ty}\, F_{{\cal T}_1} F_{{\cal T}_2} \bigg|_{\substack{v=y\\ w=z\\ s=t}}.
\ee

For instance, we can consider the triples from \eqref{app-example-triples} glued by the triangles $(5,6,1) \leftrightarrow (11,12,7)$ yields
\be
{\cal T}_1 \overset{(5,6,1)}{\underset{(11,12,7)}{\circ}} {\cal T}_2 = \{ (1, 2, 3), (3, 4, 5), (5, 6, 1), (6, 4, 2), (1, 8, 9), (9, 10, 5), \
(8, 6, 10)\}.
\ee
The corresponding rational function is
\be
-\frac{\left(x_{15} x_{26} x_{34}-x_{16} x_{24} x_{35}\right)^2 \left(x_{18} x_{59} x_{6,10}-x_{19} x_{5,10}
   x_{68}\right)^2}{x_{12} x_{13} x_{15} x_{16} x_{18} x_{19} x_{23} x_{24} x_{26} x_{34} x_{35} x_{45} x_{46}
   x_{56} x_{59} x_{5,10} x_{68} x_{6,10} x_{89} x_{8,10} x_{9,10}}.
\ee

More generally, one can glue two leading singularities ${\cal T}_1$ and ${\cal T}_2$ by a common leading singularity, if there exists one. It is straightforward to show that the resulting rational function always factors into a product of rational functions $F_{{\cal T}_1} F_{{\cal T}_2}$ times a factor that depends on the gluing object.

\section{\label{app:PT-identities}Derivation of $\mathcal{O}(\epsilon^2)$ Relations}

Let us write the expression in \eqref{genkin} as $I_{n+1}+\Gamma_{n+1}$, where
\begin{equation}
I_{n+1}=A_{12}u_{12}+A_{23}u_{23}+\ldots+A_{n1}u_{n1}\,,
\end{equation}
and we can expand $\Gamma_{n+1}$ in its Grassmann variables as
\begin{equation}
\Gamma_{n+1}=E_{n+1}\theta_{n+1}\chi_{n+1}+\varXi_{n+1}\chi_{n+1}+\Phi_{n+1}\theta_{n+1}+\gamma_{n+1},
\end{equation}
where $\Xi_{n+1}$ and $\Phi_{n+1}$ are 1-forms,
\begin{equation}
\varXi_{n+1} = -\sum_{j=1}^{n}\frac{s_{n+1,j}}{x_{n+1,j}}\theta_{j},\qquad \Phi_{n+1} = -\sum_{j=1}^{n}\frac{s_{n+1,j}}{x_{n+1,j}}\chi_{j}
\end{equation}
and $\gamma_{n+1}$ is a 2-form (whose expression is not needed here).
As already shown $E_{n+1}$ corresponds to the scattering equation
for the $n+1$-th particle after putting $A_{a,a+1}-A_{a-1,a}=\epsilon\,s_{n+1,a}$:
\begin{equation}
  A_{ii+1}=1+\epsilon\sum_{j=1}^{i}s_{n+1,j}  
\end{equation}
(recall we can set $A_{n1}=1$). Note that this requires $\sum_{j=1}^{n}s_{n+1,j}=0$.
Now let us assume $E_{n+1}=\sum_{j=1}^{i}\frac{s_{n+1,j}}{x_{n+1,j}}=0$
holds and consider the power $(I_{n+1}+\Gamma_{n+1})^{n-1}$ up to
second order in $\epsilon$. The RHS (see eq. 2.12) gives again a
linear combination of PT factors:
\begin{align}
& \frac{\theta_{d}\theta_{e}\chi_{d}\chi_{e}}{x_{de}^{2}}(I_{n+1}+\Gamma_{n+1})^{n-1}\\
& = \epsilon^{2}(n{-}1)!\left\lbrack\left(\sum_{i=1}^{n-1}s_{n+1,i}^{2}\right){\rm PT}(1,2,\ldots,n{+}1,n)+\cdots +  s_{n+1,1}^{2}{\rm PT}(1,n{+}1,2,\ldots,n)\right\rbrack+\mathcal{O}(\epsilon^{3}).\nonumber
\end{align}
Here we think of the RHS as Grassmann integrands, i.e., top forms in $\theta$ and $\chi$.

We now concentrate on the binomial expansion of the LHS,
\be
\frac{\theta_{d}\theta_{e}\chi_{d}\chi_{e}}{x_{de}^{2}}(I_{n+1}+\Gamma_{n+1})^{n-1} = \frac{(n-1)(n-2)}{2}\frac{\theta_{d}\theta_{e}\chi_{d}\chi_{e}}{x_{de}^{2}}I_{n+1}^{n-3}\Gamma_{n+1}^{2}+\ldots,
\ee
where the ellipsis represents higher order terms as well as terms independent
of $\theta_{n+1}\chi_{n+1}$. To leading order in $\epsilon$ we have
$I_{n+1}\rightarrow\Delta_{1\ldots n}$ and hence the above expansion
becomes
\begin{align}
\frac{\theta_{d}\theta_{e}\chi_{d}\chi_{e}}{x_{de}^{2}}(I_{n+1}+\Gamma_{n+1})^{n-1}=\epsilon^{2}(n-1)(n-2)&\frac{\theta_{d}\theta_{e}\theta_{n+1}\chi_{d}\chi_{e}\chi_{n+1}}{x_{de}^{2}}(\Delta_{1\ldots n})^{n-3}\nonumber\\
&\qquad\times\sum_{i,j=1}^{n}\frac{s_{n+1,i}\,s_{n+1,j}}{x_{n+1,i}\,x_{n+1,j}}\,\chi_{i}\theta_{j}+\ldots.
\end{align}
Since $\chi_{j}\theta_{j}\sum_{i=1}^{n}\frac{s_{n+1,i}}{x_{n+1,i}}\,=0$
we can write
\begin{align}
\sum_{i,j=1}^{n}\frac{s_{n+1,i}\,s_{n+1,j}}{x_{n+1,i}\,x_{n+1,j}}\,\chi_{i}\theta_{j} & = \sum_{i,j=1}^{n}\frac{s_{n+1,i}\,s_{n+1,j}}{x_{n+1,i}\,x_{n+1,j}}\,(\chi_{i}-\chi_{j})\theta_{j}\nonumber\\
 & = \frac{1}{2}\sum_{i,j=1}^{n}\frac{s_{n+1,i}\,s_{n+1,j}}{x_{n+1,i}\,x_{n+1,j}}\,\chi_{ij}\theta_{ij}\nonumber\\
 & = \frac{1}{2}\sum_{i,j=1}^{n}\frac{x_{ij}}{x_{n+1,i}\,x_{n+1,j}}\,s_{n+1,i}\,s_{n+1,j}\,u_{ij}\\
 & = \frac{1}{2}\sum_{i,j=1}^{n}\frac{x_{ij}}{x_{n+1,i}\,x_{n+1,j}}\,s_{n+1,i}\,s_{n+1,j}\,\Delta_{ij\,n+1}\,,\nonumber
\end{align}
where in the last line we added the term $u_{j,n+1}+u_{n+1,i}$ which
vanishes inside the sum. For instance:
\begin{align}
\frac{1}{2}\sum_{i,j=1}^{n}\frac{x_{ij}}{x_{n+1,i}\,x_{n+1,j}}\,s_{n+1,i}\,s_{n+1,j}\,u_{j,n+1}=&\frac{1}{2}\sum_{i,j=1}^{n}\frac{s_{n+1,i}s_{n+1,j}}{x_{n+1,j}}u_{j,n+1}\nonumber\\
&-\frac{1}{2}\sum_{i,j=1}^{n}\frac{s_{n+1,i}s_{n+1,j}}{x_{n+1,i}}u_{j,n+1}=0\,,
\end{align}
thanks to both momentum conservation and the scattering equations.
We then have
\begin{align}
&\frac{\theta_{d}\theta_{e}\chi_{d}\chi_{e}}{x_{de}^{2}}(I_{n+1}+\Gamma_{n+1})_{\epsilon^{2}}^{n-1}\nonumber\\
& = \epsilon^{2}\frac{(n-1)(n-2)}{2}\frac{\theta_{d}\theta_{e}\theta_{n+1}\chi_{d}\chi_{e}\chi_{n+1}}{x_{de}^{2}}(\Delta_{1\ldots n})^{n-3}\sum_{i,j=1}^{n}\frac{x_{ij}}{x_{n+1,i}\,x_{n+1,j}}\,s_{n+1,i}\,s_{n+1,j}\,\Delta_{i,j,n+1}\nonumber\\
 & = \epsilon^{2}(n-1)(n-2)\frac{\theta_{d}\theta_{e}\chi_{d}\chi_{e}}{x_{de}^{2}}\Lambda_{de}\Delta_{d,e,n+1}(\Delta_{1\ldots n})^{n-3}\sum_{i<j}^{n}\Lambda_{ij}^{-1}\,s_{n+1,i}\,s_{n+1,j}\,\Delta_{i,j,n+1},
\end{align}
where $\Lambda_{ij}=x_{n+1,i}x_{n+1,j}/x_{i,j}$ is the inverse
soft factor, and we used the support of $\theta_{d}\theta_{e}\chi_{d}\chi_{e}$.
Let us now triangulate $\Delta_{n+1,d,e}$ with $n$ triples:
\be
\Delta_{1\ldots n}=\sum_{i}\Delta_{i,i+1,n+1}
\ee
so that
\be
(\Delta_{1\ldots n})^{n-3}=(n-3)!\sum_{p<q<r}\prod_{l}\Delta_{l,l+1,n+1}\hat{\Delta}_{p,p+1,n+1}\hat{\Delta}_{q,q+1,n+1}\hat{\Delta}_{r,r+1,n+1},
\ee
i.e., we remove the $p^\text{th}$, $q^\text{th}$ and $r^\text{th}$ triple. Now fix the
labels $\{d,e\}=\{k,k+1\}$ for some $k$, i.e., we have an overall
factor of $\Lambda_{k,k+1}\Delta_{k,k+1,n+1}$ (the prefactor $\theta_{d}\theta_{e}\chi_{d}\chi_{e}/x_{de}^{2}$
is just a Jacobian and can be ignored once we write the top form in
terms of triples). If the set of triangles inside the product contains
$\Delta_{k,k+1,n+1}$ then such contribution vanishes, hence we can
restrict the sum to the cases $k\in\{p,q,r\}$. This can be written
as
\be
\epsilon^{2}(n-1)!\,\Lambda_{k,k+1}\sum_{\substack{p<q\\
p,q\neq k
}
}\Lambda_{ij}^{-1}\,s_{n+1,i}\,s_{n+1,j}\,\Delta_{i,j,n+1}\prod_{l}\Delta_{l,l+1,n+1}\hat{\Delta}_{p,p+1,n+1}\hat{\Delta}_{q,q+1,n+1}
\ee
Merging the sums we find
\begin{align}
\epsilon^{2}(n-1)!\,\Lambda_{n,1}\left(\sum_{i\leq p<j\leq q<n
}+\sum_{
p<j\leq q<i\leq n
}\right)\Lambda_{ij}^{-1}&\,s_{n+1,i}\,s_{n+1,j}\,\Delta_{i,j,n+1} \nonumber\\
&\times\prod_{l}\Delta_{l,l+1,n+1}\hat{\Delta}_{p,p+1,n+1}\hat{\Delta}_{q,q+1,n+1}.
\end{align}

Here we chose $k=n$ for simplicity. It can be seen that all the other
summation regions vanish by the fact that in those cases we can triangulate
the set of triangles inside the product such that they contain
$\Delta_{i,j,n+1}$. We now recognize in each term in the sum the
product of three leading singularities which triples read
\begin{align}
{\cal T}_{1} & = \{(a,b,c)\},\nonumber\\
{\cal T}_{2} & = \{(n+1,p+1,p+2),\ldots,(n+1,j,j+1),\ldots,(n+1,q-1,q)\},\\
{\cal T}_{3} & = \{(d,q+1,q+2),\ldots,(d,i,i+1),\ldots,(d,p-1,p)\}.\nonumber
\end{align}
Using the factorization property find the leading singularity function to be
\begin{align}
F\left(\left[{\cal T}_{1}\overset{(a,c)}{\underset{(i,n+1)}{\circ}}{\cal T}_{3}\right]\overset{(n+1,b)}{\underset{(d,j)}{\circ}}{\cal T}_{2}\right) & = x_{n+1,j}^{2}\mbox{{\rm PT}}(q+1,\ldots p,n+1)\, F \left[{\cal T}_{1}\overset{(a,c)}{\underset{(i,n+1)}{\circ}}{\cal T}_{3}\right]_{b=j}\\
& = \underbrace{\frac{x_{n+1,j}x_{n+1,i}}{x_{i,j}}}_{\Lambda_{ij}}\mbox{{\rm PT}}(q+1,\ldots p,n+1)\mbox{{\rm PT}}(p+1,\ldots q,n+1),\nonumber
\end{align}
where we used the gluing operation introduced in Appendix~\ref{app:gluing}. Inserting this back into the sum we arrive at
\begin{align}
&\frac{\theta_{d}\theta_{e}\chi_{d}\chi_{e}}{x_{de}^{2}}(I_{n+1}+\Gamma_{n+1})^{n-1} \Big|_{{\epsilon^{2}}}\nonumber \\
& =  \epsilon^{2}(n{-}1)!\Lambda_{n,1}\!\!\left(\!\sum_{i\leq p<j\leq q<n} \!\!+\!\! \sum_{p<j\leq q<i\leq n}\!\right)\!\! s_{n+1,i}s_{n+1,j}\mbox{{\rm PT}}(q{+}1,\ldots, p,n{+}1)\mbox{{\rm PT}}(p{+}1,\ldots, q,n{+}1)\nonumber\\
& = \epsilon^{2}(n{-}1)!\Lambda_{n,1}\sum_{1\leq p<q<n}\,s_{n+1,(p+1\ldots q)}^{2}\,\mbox{{\rm PT}}(q+1,\ldots, p,n+1)\mbox{{\rm PT}}(p+1,\ldots, q,n+1),
\end{align}
where $s_{n+1,(p+1\ldots q)}=s_{n+1,(q+1\ldots p)}:=s_{n+1,p+1}+s_{n+1,p+2}+\ldots+s_{n+1,q}$.
Combining both expansions we finally get
\begin{align}
\Lambda_{n,1}\sum_{1\leq p<q<n}&\,s_{n+1,(p+1\ldots q)}^{2}\, \mbox{{\rm PT}}(q+1,\ldots p,n+1)\mbox{{\rm PT}}(p+1,\ldots q,n+1)\nonumber\\
= & \left(\sum_{i=1}^{n-1}s_{n+1,i}^{2}\right){\rm PT}(1,2,\ldots,n+1,n)+\ldots +s_{n+1,1}^{2}{\rm PT}(1,n+1,2,\ldots,n)\,.
\end{align}

\section{\label{app:shuffle}Proof of Shuffle Identities}

In this appendix we give a new combinatorial proof of the (physical) Kleiss--Kuijf relations using the $\Delta$-algebra, together with the homomorphism defined by
\be
\text{LS}_{\cal T} \mapsto \text{LS}_{\cal T}\, \frac{\theta_a \theta_b}{x_{ab}} \frac{\chi_c \chi_d}{x_{cd}}.
\ee
This is in fact an isomorphism of symmetric group representations which is independent of the choice of pairs of distinct elements $(a,b)$ and $(c,d)$; we thereby obtain a new proof of the physical KK relations.

For any permutation $\omega$ of the set $\{2,3,\ldots, n{-}1\}$, recall the definition
\begin{equation}
{d_{\omega}}=u_{1,\omega^{(1)}}+u_{\omega^{(1)},\omega^{(2)}}+\ldots+u_{\omega^{(n-2)},n}+u_{n,1}\,
\end{equation}
and further set
\begin{align}
d :=\,&  u_{12}+\ldots+u_{m,n}+u_{n,m+1}+\ldots+u_{n-2,n-1}+u_{n-1,1}\nonumber \\
 =& \Delta_{123}+ \ldots+\Delta_{1mn}+\Delta_{1,n,m+1}+\ldots+\Delta_{1,n-2,n-1}\,,
\end{align}
corresponding to the permutation 
\be
(1,2,\ldots, m,n,m+1,m+2,\ldots, n-2,n-1).
\ee
Theorem \ref{thm:Delta KK} gives the expansion of $\frac{1}{(n-2)!}d^{n-2}$ in the KK basis,
\be
\left\{\frac{1}{(n-2)!}d_\omega^{n-2} \;\Big|\; \omega \text{ is a permutation of }(2,\ldots, n-1)\right\}.
\ee

\begin{theorem}\label{thm:Delta KK}
We have
\begin{align}\label{eq:KK relation Delta algebra}
\sum_{i=1}^{\binom{n-2}{m-1}}d_{\omega_{i}}^{n-2} = (-1)^{n-m-1}d^{n-2}\,,
\end{align}
where $\omega_i$ varies over the set of shuffles $\alpha \shuffle \beta^T$ of $\alpha=(2,3,\ldots, m)$ and $\beta^T = (n{-}1,n{-}2,\ldots,\allowbreak m{+}1)$.
\end{theorem}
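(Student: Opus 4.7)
\smallskip

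The plan is to reduce the KK relation to the two facts already established in the main text: the decomposition $d_\omega = \boxempty_\omega - \hat{\boxempty} = \boxempty - \hat{\boxempty}_\omega$ from \eqref{eq:ddecomp}, and the power identity $\sum_{i} \boxempty_{\omega_i}^{m-1} = d^{m-1}$ from \eqref{eq:powerd}. The key enabling facts are the nilpotency degrees: since $\boxempty$ and $\boxempty_\omega$ are each sums of $m{-}1$ commuting $\Delta$'s that square to zero, we have $\boxempty^{m} = \boxempty_\omega^{m} = 0$; similarly $\hat{\boxempty}^{n-m} = \hat{\boxempty}_\omega^{n-m} = 0$.

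First I would expand $d_\omega^{n-2}$ using the decomposition $d_\omega = -\hat{\boxempty} + \boxempty_\omega$ together with commutativity, via the binomial theorem:
\be
d_\omega^{n-2} = \sum_{k=0}^{n-2} \binom{n-2}{k}(-\hat{\boxempty})^{n-2-k}\, \boxempty_\omega^{k}.
\ee
The nilpotency bounds force $k \leq m-1$ and $n-2-k \leq n-m-1$, so the only surviving term is $k=m-1$, giving
\be
d_\omega^{n-2} = (-1)^{n-m-1}\binom{n-2}{m-1}\,\hat{\boxempty}^{n-m-1}\,\boxempty_\omega^{m-1}.
\ee
Summing over $\omega$ and using $\sum_i \boxempty_{\omega_i}^{m-1} = d^{m-1}$ from \eqref{eq:powerd} gives
\be
\sum_{i=1}^{\binom{n-2}{m-1}} d_{\omega_i}^{n-2} = (-1)^{n-m-1}\binom{n-2}{m-1}\,\hat{\boxempty}^{n-m-1}\,d^{m-1}.
\ee

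Next I would simplify the right-hand side by expanding $d^{m-1} = (\boxempty+\hat{\boxempty})^{m-1}$. The binomial terms $\boxempty^k \hat{\boxempty}^{m-1-k}$ combine with the prefactor $\hat{\boxempty}^{n-m-1}$ to give $\boxempty^k \hat{\boxempty}^{n-2-k}$, which vanishes unless $n-2-k \leq n-m-1$, i.e., $k=m-1$. Thus $\hat{\boxempty}^{n-m-1} d^{m-1} = \boxempty^{m-1} \hat{\boxempty}^{n-m-1}$. Applying the same nilpotency argument to expand $d^{n-2} = (\boxempty+\hat{\boxempty})^{n-2}$ shows that only the $k=m-1$ term survives, so $d^{n-2} = \binom{n-2}{m-1}\,\boxempty^{m-1}\,\hat{\boxempty}^{n-m-1}$. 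Substituting back yields
\be
\sum_{i=1}^{\binom{n-2}{m-1}} d_{\omega_i}^{n-2} = (-1)^{n-m-1}\, d^{n-2},
\ee
which is the claim.

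The substantive content of the argument is entirely packaged into the combinatorial identity $\sum_i \boxempty_{\omega_i}^{m-1} = d^{m-1}$ established in Appendix \ref{app:shuffle} (and its $\hat{\boxempty}$ counterpart); once that is in hand, the theorem follows from purely formal manipulations in the commutative nilpotent algebra. The only real obstacle is therefore the lemma \eqref{eq:powerd}, which requires showing that the shuffle-summed triangulations of the two caps reassemble the full $n$-gon triangulation raised to the appropriate power — a combinatorial statement about how triangulations pair up under shuffles. The remainder is bookkeeping with the binomial theorem and the two nilpotency constraints $\boxempty^m = \hat{\boxempty}^{n-m} = 0$.
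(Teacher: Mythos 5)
Your formal manipulations are correct, and in fact they reproduce almost verbatim the reduction the paper performs in the main text: writing $d_{\omega}=\boxempty_{\omega}-\hat{\boxempty}$, killing all but one binomial term via $\boxempty_{\omega}^{m}=0$ and $\hat{\boxempty}^{n-m}=0$, summing over shuffles, and comparing with $d^{n-2}=\binom{n-2}{m-1}\boxempty^{m-1}\hat{\boxempty}^{n-m-1}$. The problem is that this reduction is not what the paper's proof of Theorem~\ref{thm:Delta KK} consists of. The statement you were asked to prove lives in Appendix~\ref{app:shuffle}, and the paper's proof of it is precisely the proof of the ingredient you declare to be "established in Appendix~\ref{app:shuffle}" and then assume: the identity \eqref{eq:powerd}, $\sum_{i}\boxempty_{\omega_{i}}^{m-1}=d^{m-1}$ (together with the fact, also only established there, that each $\boxempty_{\omega_{i}}:=d_{\omega_{i}}+\hat{\boxempty}$ really can be written as a sum of exactly $m-1$ commuting nilpotent $\Delta$'s, which is what licenses your bound $\boxempty_{\omega}^{m}=0$). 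Citing the appendix for these facts is circular here, so as a standalone argument your proposal has a genuine gap: the entire combinatorial content of the theorem is deferred rather than proved. You even flag this yourself in the last paragraph, but flagging the obstacle is not the same as overcoming it.

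What the paper actually does at this point is the part you skipped. Each shuffle $\omega_{i}$ is encoded as a path from label $1$ to label $n$ on a two-route diagram (Figure~\ref{fig:random-walk}); the interior of the path is triangulated into $m-1$ triangles, giving both the representation $\boxempty_{\omega_{i}}=\sum_{j}\Delta_{j}^{(i)}$ and the count $\binom{n-2}{m-1}$ of shuffles. Then $f_{n,m}:=\sum_{i}\boxempty_{\omega_{i}}^{m-1}-d^{m-1}$ is shown to vanish by induction on $(n,m)$: since $f_{n,m}$ is multilinear in the individual triangles (nilpotency), it suffices to check the coefficient of a fixed triangle $\Delta^{(*)}$; the $\binom{n-3}{m-2}$ paths containing $\Delta^{(*)}$ restrict to shuffle paths on $n-1$ labels, the inductive hypothesis $f_{n-1,m-1}=0$ converts their contribution into $(m-1)\Delta^{(*)}d_{n-1}^{m-2}$, and this matches the expansion $d_{n}^{m-1}=d_{n-1}^{m-1}+(m-1)\Delta^{(*)}d_{n-1}^{m-2}$, with the base case $m=2$ being $U(1)$ decoupling. (The same argument yields the linear relation \eqref{eq:linearbox} as a corollary.) To complete your proposal you would need to supply this induction, or an equivalent combinatorial argument for \eqref{eq:powerd}; once that is in place, your binomial bookkeeping correctly finishes the theorem, exactly as in the paper's main-text derivation of \eqref{eq:KKd}.
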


\begin{proof}
The first step is to construct the triangulations of $\boxempty_{\omega_i}$
and $\hat{\boxempty}_{\omega_i}$. For this we note that each $\boxempty_{\omega_i}$
represents a path from label $1$ to label $n$ in the ``two-route''
arrangement of Figure~\ref{fig:random-walk}. Each path passes through all the labels,
and covers the labels in each route in a consecutive order. 

The shaded region we call the ``interior'' (exterior) of the path,
and we associate the complex $\boxempty_{\omega_i}$ ($\hat{\boxempty}_{\omega_i}$). It is easy to check that constructed in this way $\boxempty_{\omega_i}$ consists of $m-1$ triangles. Let us denote by $\{\Delta_{j}^{(i)}\}_{j=1}^{m-1}$
some canonical triangulation of the interior of the $i^{\text{th}}$ path,
i.e., $\boxempty_{\omega_i}=\sum_{j}\Delta_{j}^{(i)}$. Then we can express
the KK sum as a ``sum over paths'':
\be
\sum_{i=1}^{\binom{n-2}{m-1}}\boxempty_{\omega_i}^{m-1}=(m-1)!\sum_{i=1}^{\binom{n-2}{m-1}}\prod_{j=1}^{m-1}\Delta_{j}^{(i)}
\ee
\begin{figure}
\centering
\includegraphics[scale=0.9]{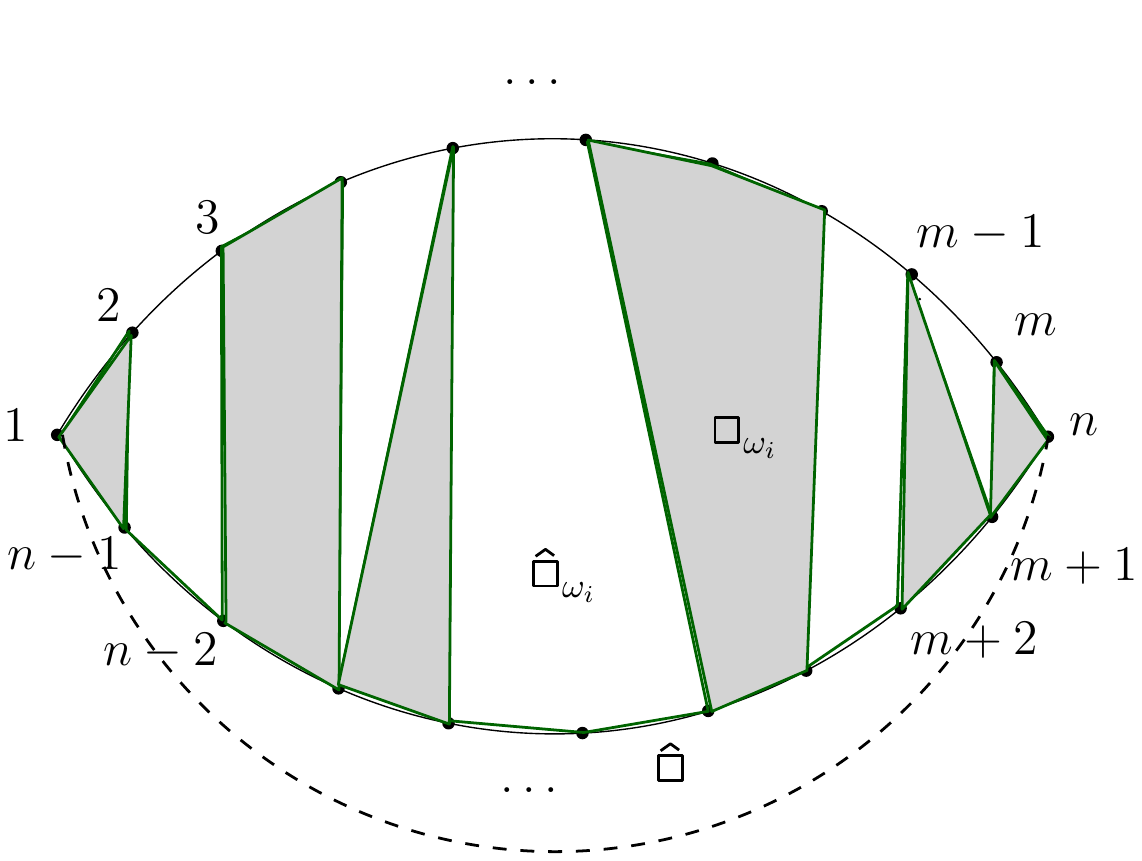}
\caption{\label{fig:random-walk} A ``random walk'' from $1$ to $n$. Recall that $\boxempty_{\omega_i}+\hat{\boxempty}_{\omega_i}=\boxempty+\hat{\boxempty}$
and $d_{\omega_i}=\boxempty_{\omega_i}-\hat{\boxempty}$.}
\end{figure}

We are now in position to prove our main identities. Define the function:
\be
f_{n,m}(\{\Delta_{j}^{(i)}\}):=\sum_{i=1}^{\binom{n-2}{m-1}}\boxempty_{\omega_{i}}^{m-1}-d_n^{m-1},
\ee
where $d_n=\boxempty+\hat{\boxempty}$. Note that such function
can be expanded explicitly in terms of, in principle, all the possible
triangles $\{\Delta_{j}\}$ with a side laying on one of the routes,
since we can always triangulate $\boxempty_{\omega_i}$ using such. Note
also that $f_{n,m}(\{\Delta_{j}^{(i)}\})$ is of uniform Grassmann
degree, and thanks to the nilpotency of $\Delta$'s it is at most
linear in each $\Delta_{j}$. Hence, it is sufficient to show that
the coefficient of each $\Delta_{j}$ vanishes to argue that $f_{n,m}(\{\Delta_{j}^{(i)}\})=0$.

We will use an inductive argument in $n$ and $m$. The cases $m=2$
are trivial to check for any $n$ since they correspond to $U(1)$
decoupling. Let us then assume $f_{n-1,m-1}=0$ for any set of triangles.
Consider then a given $\Delta^{(*)}$ that is contained among a subset
of paths $P_{\Delta^{(*)}}=\{\boxempty_{i}\}_{i\in I}$. We can single
out the contribution from $\text{\ensuremath{\Delta}}^{(*)}$ in the
sum over paths as
\be
\sum_{i=1}^{\binom{n-2}{m-1}}\boxempty_{\omega_{i}}^{m-1}=(m-1)!\Delta^{(*)}\sum_{i\in I}\prod_{j=1}^{m-2}\Delta_{j}^{(i)}+\ldots.
\ee

Now, it is easy to check that removing $\Delta^{(*)}$ from a path
$i\in I$ defines a new path for $n-1$ labels, whose interior contains
$m-2$ triangles (see Figure~\ref{fig:singling}). In fact we can now see that the number
of such paths is $|I|=\binom{n-3}{m-2}$. Since $f_{n-1,m-1}=0$ we
can write 
\be
\sum_{i\in I}\prod_{j=1}^{m-2}\Delta_{j}^{(i)}=\frac{1}{(m-2)!}d_{n-1}^{m-2}.
\ee
\begin{figure}
\centering
\includegraphics[scale=0.5]{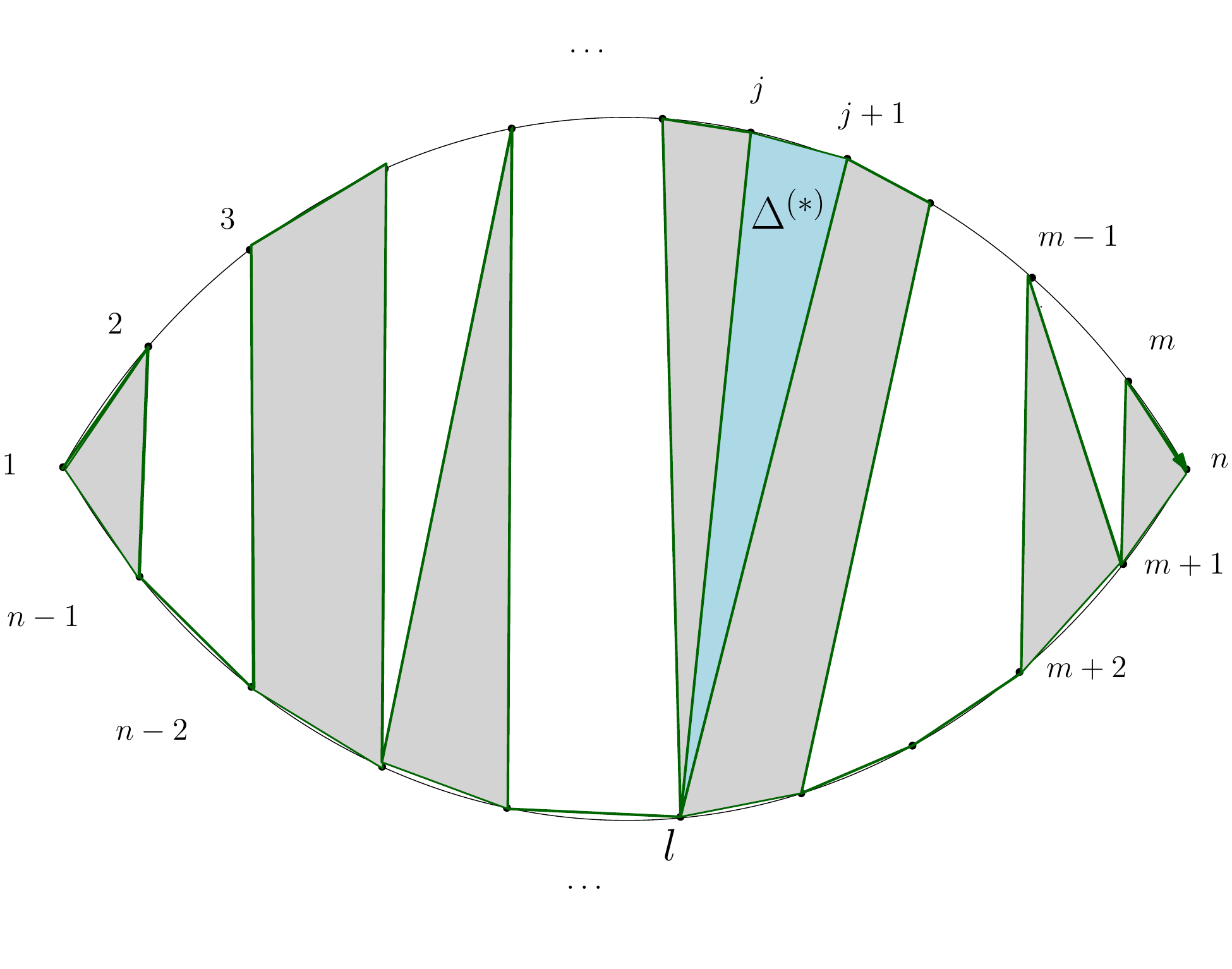}
\caption{\label{fig:singling}Singling out $\Delta^{(*)}$. The path obtained
by identifying label $j$ with $j+1$ contains $m-2$ triangles and
$n-1$ labels.}
\end{figure}
where $d_{n-1}$ is now the complement region of $\Delta^{(*)}$.
We then have 
\be
\sum_{i=1}^{\binom{n-2}{m-1}}\boxempty_{\omega_{i}}^{m-1}=(m-1)\Delta^{(*)}d_{n-1}^{m-2}+\ldots.
\ee
Now write $d_{n}=\Delta^{(*)}+d_{n-1}$ such that we have
\be
d_{n}^{m-1}=d_{n-1}^{m-1}+(m-1)\Delta^{(*)}d_{n-1}^{m-2}.
\ee

We can now see that the coefficient of the second term precisely cancels
the contribution from the path sum, hence $f_{n,m}=0$. This concludes
the proof.
\end{proof}

As a corollary of our construction, we note that, given a triangulation of the region $d_n$, each of the triangles will appear $|I|=\binom{n-3}{m-2}$ times when summing over all possible paths, that is
\be
\sum_{i=1}^{\binom{n-2}{m-1}}\boxempty_{\omega_{i}}=\binom{n-3}{m-2} d_n\,.
\ee
This proves our formula \eqref{eq:linearbox}.

It is interesting to note that the idea of the proof can be extended
to a general piece (i.e., subset of triangles) of $\Omega_{n}$, say
$\Omega_{n}=R+\hat{R}$. If $R$ has $k$ triangles then $\Omega_{n}^{m-1}=\binom{m-1}{k}R^{k}\hat{R}^{m-1-k}+\ldots{\rm (lower\,\,orders)}.$
The binomial in the LHS of the identity is unique in that it accounts for all the combinatorial factors associated to singling out $R^{k}$ in the path sum.

\bibliographystyle{JHEP}
\bibliography{references}

\end{document}